\journal{NeuroImage}
\newtheorem{alg}{Algorithm}
\newtheorem{cor}{Corollary}
\newtheorem{theorem}{Theorem}
\newsavebox{\savepar}
\newcommand{\bfx}{\mathbf {x}}
\newcommand{\bfy}{\mathbf {y}}
\newcommand{\bfb}{{\mathbf b}}
\newcommand{\bfa}{{\mathbf a}}
\newcommand{\bfd}{{\mathbf d}}
\newcommand{\bfe}{{\mathbf e}}
\newcommand{\bfp}{{\mathbf p}}
\newcommand{\bfq}{{\mathbf q}}
\newcommand{\bfr}{{\mathbf r}}
\newcommand{\bfs}{{\mathbf s}}
\newcommand{\Cp}{C_{\mathrm{p}}}
\newcommand{\CT}{C_{\mathrm{T}}}
\newcommand{\CF}{C_{\mathrm{F}}}
\newcommand{\CNS}{C_{\mathrm{NS}}}
\newcommand{\CS}{C_{\mathrm{S}}}
\newcommand{\DVT}{\mathrm{DV}_{\mathrm{T}}}
\newtheorem{lemma}{Lemma}
\theoremstyle{remark}
\newcommand{\bitem}{\begin{enumerate}}
\newcommand{\eitem}{\end{enumerate}}
\begin{document}
\begin{frontmatter}
\title{Model Error Correction for Linear Methods of Reversible Radioligand Binding Measurements in PET Studies}
\author[label1]{Hongbin~Guo\corref{cor1}}
\cortext[cor1]{Corresponding author. Tel: 1-480-965-8002,   Fax: 1-480-965-4160.}
\address[label1]{Arizona State University, Department of Mathematics and Statistics, Tempe, AZ  85287-1804. } 
\ead{hguo1@asu.edu}
\author[label1]{Rosemary~A~Renaut} 
\author[label2]{Kewei~Chen}
\address[label2]{Banner Alzheimer Institute and Banner Good
Samaritan Positron Emission Tomography Center, Phoenix, AZ 85006}
\author[label2]{Eric M Reiman}

\begin{abstract}
Graphical analysis  methods are widely used in positron emission tomography  quantification because of their simplicity and model independence. But they may, particularly for reversible kinetics,  lead to  bias in the estimated parameters.  The source  of the bias is commonly attributed to  noise in the data. Assuming a two-tissue compartmental model, we investigate the bias that originates from model error. 
This bias is an intrinsic property of the simplified linear models used for limited scan durations, and it is exaggerated by  random noise and numerical quadrature error. Conditions are derived under which Logan's graphical method either over- or under-estimates the distribution volume in the noise-free case. The bias caused by model error is quantified analytically. The presented analysis shows that the bias of graphical methods is inversely proportional to the dissociation rate.  Furthermore, visual examination of the linearity of the Logan plot  is not sufficient for guaranteeing that equilibrium  has been reached. A new model which retains the elegant properties of graphical analysis methods is presented, along with a numerical algorithm for its solution. 
We perform simulations with the fibrillar amyloid $\beta$ radioligand [11C] benzothiazole-aniline  using published data from the University of Pittsburgh and Rotterdam groups. 
The results show that the  proposed method significantly reduces the bias due to model error. Moreover, the results for data acquired over a $70$ minutes scan duration are at least as good as those obtained using existing methods for data acquired over  a $90$ minutes scan duration.
\end{abstract}
\begin{keyword} Bias; graphical analysis; Logan plot; PET quantification; PIB; Alzheimer's disease; distribution volume.
\PACS 82.20.Wt\sep 87.57.-s\sep 87.57.uk
\end{keyword}

\end{frontmatter}
\section{Introduction}

Graphical analysis (GA) has been routinely used for quantification of positron emission tomography (PET) radioligand measurements. The first GA method  for measuring primarily tracer uptakes for irreversible  kinetics  was introduced by Patlak, \cite{Patlak83,Patlak85}, and extended for measuring tracer distribution (accumulation) in reversible systems by Logan, \cite{Logan90}. These techniques have been utilized both with input data acquired from plasma measurements and using the time activity curve from a reference brain region. They have been used for calculation of tracer uptake rates, absolute distribution volumes (DV) and DV ratios (DVR), or, equivalently, for absolute and relative binding potentials (BP). They are widely used because of their inherent simplicity and general applicability regardless of the  specific compartmental model.

The well-known bias, particularly for reversible kinetics, in parameters estimated by GA is commonly  attributed to noise in the data,   \cite{slifstein2000,ichise2002str,Logan03}, and therefore techniques to reduce the bias have concentrated on limiting the impact of the noise. These include (i) rearrangement of the underlying system of linear equations so as to reduce the impact of noise yielding the so-called \textit{multi-linear} method (MA1), \cite{ichise2002str}, and a second multi-linear approach (MA2), \cite{Blomqvist84}, (ii) preprocessing using the method of generalized linear least squares (GLLS), \cite{Feng96}, yielding a hybrid GLLS-GA method, \cite{logan2001a}, (iii) use of the method of perpendicular least squares, \cite{varga2002mod}, also known as total least squares (TLS), \cite{Golub80}, (iv)  likelihood estimation, \cite{ogden2003est}, (v) Tikhonov regularization \cite{Buchert03}, (vi) principal component analysis, \cite{Joshi_PCA2008}, and (vii)  reformulating the method of Logan so as to reduce the noise in the denominator, \cite{Yun2008bias}. Here, we turn our attention to another important source of the bias: the model  error which is implicit in GA approaches.

The  bias associated with GA approaches  has, we believe, three possible sources. The bias arising due to random noise is most often discussed, but errors may also be attributed to the use of numerical quadrature and  an approximation of the underlying compartmental model. It is demonstrated in Section~\ref{sec:theory} that not only is bias  an intrinsic property of the linear model for limited scan durations, which is exaggerated by noise, but also that it may be  dominated by the effects of the model error. Indeed, numerical simulations, presented in Section~\ref{sec:validatetheory}, demonstrate that large bias can result even in the noise-free case. Conditions for over- or under-estimation of the DV due to model error and the extent of bias of the Logan plot  are quantified analytically. These lead to  the design of a bias correction method, Section~\ref{sec:method}, which still maintains the elegant simplicity of GA approaches. This bias reduction is achieved by the introduction of a simple nonlinear term in the model. While this approach adds some moderate computational expense, simulations reported in Section~\ref{sec:resultsSc0} for the fibrillar amyloid $\beta$ radioligand [11C] benzothiazole-aniline (Pittsburgh Compound-B [PIB]), \cite{Mathis03}, illustrate that it greatly reduces bias. Relevant observations are discussed in Section~\ref{sec:diss} and conclusions  presented in Section~\ref{sec:conc}. 

\section{Theory}\label{sec:theory}
\subsection{Existing linear methods}
For the measurement of DV, existing linear quantification methods for reversible radiotracers with a known input function, i.e. the unmetabolized tracer concentration in plasma, are based on the following linear approximation of the true kinetics developed by Logan, \cite{Logan90}: 
\begin{equation}\label{eq:MA0}
\mathrm{ MA0:}\hspace{0.5cm} \int_0^t \CT(\tau){\rm d} \tau \approx \mathrm{DV} \int_0^t\Cp(\tau){\rm d}\tau-b\CT(t).
\end{equation}
Here $\CT(t)$ is the measured \textit{tissue time activity curve} (TTAC), $\Cp(t)$ is 
the \textit{input function}, DV represents the \textit{distribution volume} and quantity $b$ is a  constant. With known $\CT(t)$ and $\Cp(t)$ we can solve for DV and $b$ by the method of linear least squares. This model, which we denote by MA0 to distinguish it from MA1 and MA2 introduced in \cite{ichise2002str},  approximately describes tracer behavior at equilibrium. Dividing through by $\CT(t)$, showing that the DV is the linear slope and $-b$ the intercept,  yields the original   Logan graphical analysis model, denoted here by Logan-GA, 
\begin{equation}\label{eq:Logan}
\mathrm{ Logan-GA:}\hspace{0.5cm}\frac{\int_0^t \CT(\tau){\rm d}\tau}{\CT(t)}\approx DV \frac{\int_0^t\Cp(\tau){\rm d}\tau}{\CT(t)}-b,
\end{equation}
in which the DV and intercept $-b$ are obtained by using linear least squares (LS) for the sampled version of  (\ref{eq:Logan}).  
Although it is well-known that this model often leads to under-estimation of the DV it is still  widely used in  PET studies.  An alternative formulation based on 
(\ref{eq:MA0}) is  the so-called MA1, 
\begin{equation}\label{eq:Multi-lin1}\mathrm{ MA1:}\hspace{0.5cm} \CT(t) \approx \frac{\mathrm{DV}}{b} \int_0^t\Cp(\tau){\rm d}\tau-\frac{1}{b}\int_0^t \CT(\tau){\rm d}\tau,
\end{equation}
for which the DV can again be obtained using LS \cite{ichise2002str}. Recently another formulation, obtained by division in (\ref{eq:MA0}) by $\Cp(t)$  instead of $\CT(t)$, has been developed by Zhou {\it et al}, \cite{Yun2008bias}. But, as noted by Varga {\it et al} in  \cite{varga2002mod} the noise appears in both the independent and dependent variables in (\ref{eq:Logan}) and thus  TLS  may be a  more appropriate model than LS for obtaining the DV. Whereas it has been concluded through numerical experiments for tracer [$^{18}$F]FCWAY and [$^{11}$C]MDL 100,907, \cite{ichise2002str}, that MA1 (\ref{eq:Multi-lin1}) performs better than other linear methods, including Logan-GA (\ref{eq:Logan}), TLS and  MA2 \cite{Blomqvist84, ichise2002str}, none of these techniques explicitly deals with the inherent error due to the assumption of model MA0 (\ref{eq:MA0}).  The focus here is thus examination of the model error specifically for Logan-GA and MA1, from which a new method for reduction of model error is designed. 

\subsection{Model error analysis} \label{subsec:errana}
The general three-tissue compartmental  model for the reversible radioligand binding kinetics of a given brain region or a voxel can be illustrated as follows, \cite{Frost89,Slif01}:

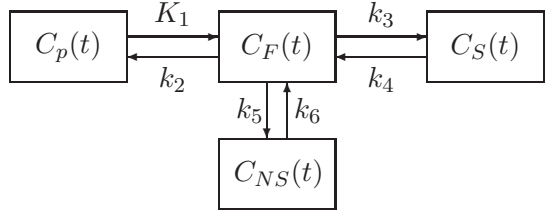
\begin{figure}[h]
\begin{center}
\begin{picture}(9.200000,3.400000)(0.000000,-3.400000)
\put(0.0000,-1.2000){\framebox(2.0000,1.2000)[c]{\shortstack[c]{
$C_p(t)$
}}}
\put(2.0000,-1.0000){\makebox(1.6000,0.8000)[c]{\shortstack[c]{
$K_1$\\
~\vspace{2ex}\\
$k_2$
}}}
\put(3.6000,-1.2000){\framebox(2.0000,1.2000)[c]{\shortstack[c]{
$C_F(t)$
}}}
\put(3.6000,-0.7750){\vector(-1,0){1.6000}}
\put(2.0000,-0.4250){\vector(1,0){1.6000}}
\put(5.6000,-1.0000){\makebox(1.6000,0.8000)[c]{\shortstack[c]{
$k_3$\\
~\vspace{2ex}\\
$k_4$
}}}
\put(7.2000,-1.2000){\framebox(2.0000,1.2000)[c]{\shortstack[c]{
$C_S(t)$
}}}
\put(7.2000,-0.7750){\vector(-1,0){1.6000}}
\put(5.6000,-0.4250){\vector(1,0){1.6000}}
\put(3.6000,-2.2000){\makebox(2.0000,1.0000)[c]{\shortstack[c]{
~\hspace{-0.5ex}~\hspace{-1.5ex}  $k_5$ ~\hspace{0.1ex} $k_6$
}}}
\put(3.6000,-3.4000){\framebox(2.0000,1.2000)[c]{\shortstack[c]{
~\hspace{-0.5ex}$C_{NS}(t)$
}}}
\put(4.7750,-2.2000){\vector(0,1){1.0000}}
\put(4.4250,-1.2000){\vector(0,-1){1.0000}}
\end{picture}
\caption{\normalsize Three-tissue compartmental model of reversible radioligand binding dynamics.
 \label{fig:3T}}
\end{center}
\end{figure}

Here $\Cp(t)$ (kBq/ml) is the input function, i.e. the unmetabolized radiotracer concentration in plasma, and $\CF(t)$, $\CNS(t)$ and $\CS(t)$ (kBq/g) are free radioactivity, nonspecific bound and specific bound tracer concentrations, resp., and $K_1$  (ml/min/g) and $k_i$ (1/min), $i=2,\cdots,6$, are rate constants. The DV is related to the rate constants as follows \cite{Gunn01PETmodel},
\begin{equation}\label{eq:DV}
\mathrm{DV}=\frac{K_1}{k_2}(1+\frac{k_3}{k_4}+\frac{k_5}{k_6}). \end{equation}
The numerical implementation for estimating the unknown rate constants of the differential system  illustrated in Figure~\ref{fig:3T}  is difficult  because three exponentials are involved in the solution of this system, \cite{Slif01}. Specifically, without the inclusion of additional prior knowledge, the rate constants may be unidentifiable, \cite{Godfrey83}. Fortunately, for most tracers it can safely be assumed that  $\CNS$ and  $\CF$  reach equilibrium rapidly for specific binding regions. Then it is appropriate to use a two-tissue four-parameter (2T-4k) model by binning $\CNS(t)$ and $\CF(t)$ to one compartment $C_{F+NS}(t)=\CF(t)+\CNS(t)$. This  is equivalent to taking $k_5=k_6=0$, and hence $\CNS(t)=0$.  On the other hand, for regions without specific binding activity, we know $\CS(t)=0$ which is equivalent to taking $k_3=k_4=0$, and it is again appropriate for most radioligands to bin  $\CNS(t)$ and $\CF(t)$. The one-tissue compartmental model is then appropriate for regions without specific binding activity. For some tracers, however, for example  the modeling of PIB in the cerebellar reference region, the best data  fitting is obtained by using the 2T-4k model without binning $\CNS(t)$ and $\CF(t)$, \cite{Price05}.   Assuming the latter, the  DV is given by $K_1/k_2(1+k_3/k_4)$, and $K_1/k_2(1+k_5/k_6)$, for regions with and without specific binding activity, resp. Ignoring the notational differences between the two models,  for regions with and without specific binding activity, they are both described by the same abstract mathematical 2T-4k model   equations. Here, without loss of generality,  we present the 2T-4k  model equations for specific binding regions, 
\begin{eqnarray}
 \frac{{\rm d}C_{F+NS}(t)}{{\rm d}t}&=&K_1\Cp(t)-(k_2+k_3)C_{F+NS}(t)+k_4\CS(t) \label{eq:2T4kC1} \\
 \frac{{\rm d}\CS(t)}{{\rm d}t}&=&k_3C_{F+NS}(t)-k_4\CS(t). \label{eq:2T4kCS}
\end{eqnarray}
To obtain the equations appropriate for regions without specific binding activity, $\CS(t)$ is replaced by  $\CNS(t)$ and $k_3$ and $k_4$ are interpreted as the association and dissociation parameters of  regions without specific binding activity. To simplify the explanation  $\CS(t)$, $k_3$ and $k_4$ are used throughout for both regions with and without specific binding activity, with the assumption that $\CS(t)$, $k_3$ and $k_4$ should automatically be replaced by $\CNS(t)$, $k_5$ and $k_6$ respectively, when relevant.  

The solution of the linear differential system (\ref{eq:2T4kC1})-(\ref{eq:2T4kCS})
is given by 
\begin{eqnarray}
 C_{F+NS}(t)&=& ( a_1e^{-\alpha_1t}+b_1e^{-\alpha_2t}   ) \otimes \Cp(t)\label{sol:C_1} \\
 \CS(t)&=&a_2( e^{-\alpha_1t}-e^{-\alpha_2t} ) \otimes \Cp(t)   \label{sol:CS}
\end{eqnarray}
where $\otimes$ represents the convolution operation, 
\begin{eqnarray}
\alpha_{1, 2} &=& ( k_2+k_3+k_4 \mp \sqrt{(k_2+k_3+k_4)^2 - 4k_2 k_4}\ ) /2, \quad \mathrm{and}\nonumber \\
a_1&=& \frac{K_1(k_4-\alpha_1)}{\alpha_2-\alpha_1},\,  b_1=\frac{K_1(\alpha_2-k_4)}{\alpha_2-\alpha_1},\, \mathrm{and} \, a_2= \frac{K_1k_3}{\alpha_2-\alpha_1}. \label{symb:a1}
\end{eqnarray}
The overall concentration of radioactivity  is
\begin{equation}\label{sol:CT}
\CT(t)=C_{F+NS}(t)+\CS(t)= (( a_1+a_2)e^{-\alpha_1t}+(b_1-a_2)e^{-\alpha_2t}   ) \otimes \Cp(t). 
\end{equation}
Integrating (\ref{eq:2T4kC1})-(\ref{eq:2T4kCS}) and rearranging yields
\begin{eqnarray}
 \int_0^t\CT(\tau){\rm d}\tau&=&\mathrm{DV}\int_0^t\Cp(\tau){\rm d}\tau-\frac{k_3+k_4}{k_2k_4}C_{F+NS}(t)-\frac{k_2+k_3+k_4}{k_2k_4}\CS(t),\label{eq:sfunc}\\
&=&\mathrm{DV}\int_0^t\Cp(\tau){\rm d}\tau-\frac{k_3+k_4}{k_2k_4}\CT(t)-\frac{1}{k_4} \CS(t). \label{eq:root}
\end{eqnarray}
This is model (\ref{eq:MA0}) when $\CS(t)$ is linearly proportional to $\CT(t)$ for a time window within the scan duration of $T$ minutes. The accuracy of  linear methods based on (\ref{eq:MA0}) is thus dependent on the validity of the assumption that  $\CS(t)$ and $C_{F+NS}(t)$ are approximately linearly proportional to $\CT(t)$ over a time window within $[0,T]$. 
Logan observed that  $C_{F+NS}(t)$ and $\CS(t)$ are roughly proportional to $\CT(t)$, after some time point  $t^*$, \cite{Logan90}.  If the assumption of linear proportionality breaks down for the given window,  $[t^*, T]$,  bias in the estimated uptake rate or DV will be introduced, as shown later in Section~\ref{sec:resultsSc0},  due to the intrinsic model error of a GA method. 
Indeed, in  Section \ref{subsec:equilibrium} we show that, for the PIB radioligand on some regions with small $k_4$, there is no window within a $90$ minutes scan duration  where $\CS(t)$ and $\CT(t)$ are linearly  proportional. This is despite the apparent good linearity, visually, of the  Logan plot  of $\int_0^t \CT(\tau){\rm d}\tau/\CT(t)$ against $\int_0^t\Cp(\tau){\rm d}\tau/\CT(t)$. Waiting for  equilibrium, which may take several hours, is impractical in terms of   patient comfort, cost and measurement of radioactivities. 

The limitation of the constant approximation can be analysed theoretically. Because $\alpha_2>>\alpha_1>0$ and $\Cp(t)$ is very small for large time the convolution  $e^{-\alpha_2t}\otimes \Cp(t)=\int_0^te^{-\alpha_2(t-\tau)}\Cp(\tau){\rm d}\tau$ 
is relatively small. We can safely assume that the ratio of $e^{-\alpha_2t}\otimes \Cp(t)$ to $e^{-\alpha_1t}\otimes \Cp(t)$ is roughly $0$ for $t>t^*$.  Then $\CS(t)$, see equation (\ref{sol:CS}), is approximately proportional to $e^{-\alpha_1t} \otimes \Cp(t)$ for $t>t^*$. In our tests with PIB, the neglected component  $a_2e^{-\alpha_2t} \otimes \Cp(t)$ is less than $ 8\%\CS(t)$ for $t\ge 35$ min.. On the other hand, this is not the case for $C_{F+NS}(t)$, see equation (\ref{sol:C_1}), because $a_1$ and $b_1$ need not be of the same scale. For example, if $k_4<<k_2+k_3$ we know $b_1/a_1\approx (k_2+k_3)/(2k_4)$ from (\ref{symb:a1}), thus $b_1>>a_1>0$. Specifically,  $b_1e^{-\alpha_2t} \otimes \Cp(t)$ may not be small in relation to  $a_1e^{-\alpha_1t} \otimes \Cp(t)$. Thus, it is not appropriate, as is assumed for the Logan-GA (\ref{eq:Logan}) and other linear methods derived from MA0, to approximate 
\begin{equation}\label{b_range}
\bar{\bfs}(t)=\frac{k_3+k_4}{k_2k_4}\cdot \frac{C_{F+NS}(t)}{\CT(t)}+\frac{k_2+k_3+k_4}{k_2k_4}\cdot \frac{\CS(t)}{\CT(t)},
\end{equation}
as constant for $t\in [t^*, T]$. One may argue that if $(a_1+a_2)/(b_1-a_2)$ is close to $1$  the term  $e^{-\alpha_2t} \otimes \Cp(t)$ in $\CT(t)$ could be ignored. Then the ratio of $\CT(t)$ to $\CS(t)$ would be close to constant after $t^*$, and the resulting estimates of the DV using Logan-GA (\ref{eq:Logan}) and MA1 (\ref{eq:Multi-lin1}) would be reasonable.  While it is easy to verify that $(a_1+a_2)/(b_1-a_2)$ is positive and bounded above by one, this fraction need not be close to its upper bound. Indeed, for realistic test data, see Table~\ref{tab:ki},  $0.05\le (a_1+a_2)/(b_1-a_2)\le 0.65$.  The simulations presented in Tables~\ref{tab:overest} and \ref{tab:Sc0} validate that a small value of this fraction may cause a problem in the estimation of the DV using the linear Logan-GA and MA1 methods.

It is immediate using $\CT(t)=C_{F+NS}(t)+\CS(t)$, and positivity of both $C_{F+NS}(t)/\CT(t)$ and $\CS(t)/\CT(t)$, that $\bar{\bfs}(t)$ is bounded above and below, 
\begin{equation} \label{ineq:bounds_s}
\frac{k_3+k_4}{k_2k_4}<  \bar{\bfs}(t)< \frac{k_2+k_3+k_4}{k_2k_4}=\frac{k_3+k_4}{k_2k_4}+\frac{1}{k_4},
\end{equation}
and $1/k_4$ determines the variation in $\bar{\bfs}(t)$. If $k_4$ is  small the bound is not tight and the  DV estimated by Logan-GA, or a linear method derived from MA0, may not be accurate, see for example the regions of interest (ROIs) \textbf{1}, \textbf{3} and \textbf{6} in the test examples reported in Table~\ref{tab:ki}. We reiterate that, by the discussion above, the variation for ROI~\textbf{6}, within which no specific binding activity exist, is   determined by $1/k_6$. 
This relationship between the size of $k_4$ and the bias in the Logan-GA estimate of the DV  is illustrated in Figure~\ref{fig:Bias_k4} of Section~\ref{subsec:equilibrium} for the test data of Table~\ref{tab:ki}. 

\subsection{Model error of Logan equation}

The complete mathematical result for the  model error of Logan-GA and MA0 is presented  in the  {\em Appendix}. Similar results, omitted here to save space,  can be obtained for MA1. The main conclusion is that both Logan-GA and MA0 can lead to an over-estimation of the DV. This contrasts the standard view of these methods. We summarize in the following theorem, for which 
the main idea is to show that replacing  (\ref{b_range}) which occurs on the right hand side of (\ref{eq:sfunc}) by a constant intercept $b$ introduces an error in the least squares solution for the DV which can be specifically quantified.   
\begin{cor}\label{thm:Logan_bias}
Suppose Logan-GA, or respectively MA0,  are used for noise-free data acquired for $n$ frames with frame time $t_i, i=1,\cdots,n$ and $t^*=t_l$. Then, with $\bar{\bfs}(t)$ as defined in (\ref{b_range}),  for each method the same conclusions are reached:
\begin{itemize}
 \item The DV is over-estimated (under-estimated)  if $\bar{\bfs}(t), t \in [t_l, t_n],$  is a  non-constant decreasing (increasing) function, and 
\item  the DV is exact if $\bar{\bfs}(t), t \in [t_l, t_n],$  is a constant function;
\end{itemize}
Let $\mathrm{DV}_\mathrm{T}$ be the true value of the DV, and define the variation of a function over $[t_l, t_n]$ by  
\begin{equation}\label{eq:var}
V(\bfx(t))=|\displaystyle\max_{t \in [t_l, t_n]}\bfx(t)-\displaystyle\min_{t \in [t_l, t_n]} \bfx(t)|.
\end{equation} Then the bias in $\mathrm{DV}_\mathrm{L}$ calculated by Logan-GA is bounded by 
{\small
\begin{equation}\label{LoganDV_bound}
 |\mathrm{DV}_\mathrm{L}-\mathrm{DV}_\mathrm{T}|\le \frac{(n-l+1)\displaystyle\sum_{i=l}^n\bar{\bfp}_i}{
\displaystyle\sum_{i\neq j, l\le i, j\le n} (\bar{\bfp}_i-\bar{\bfp}_j)^2
} V(\bar{\bfs}(t)),
\end{equation}
}
where $\bar{\bfp}_i=\int_0^{t_i}\Cp(\tau){\rm d}\tau/\CT(t_i)$.
\end{cor}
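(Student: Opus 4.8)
The plan is to recognise Logan-GA as an ordinary least squares (OLS) straight-line fit in which the true ``intercept'' is not a constant but the slowly varying function $-\bar{\bfs}(t_i)$, and then to read the bias straight off the closed form of the OLS slope. Dividing the exact identity (\ref{eq:sfunc}) by $\CT(t)$ and sampling at the frame times gives, with $\bar{\bfy}_i:=\int_0^{t_i}\CT(\tau){\rm d}\tau/\CT(t_i)$ and $\bar{\bfs}$ as in (\ref{b_range}), the \emph{exact} relation $\bar{\bfy}_i=\mathrm{DV}_\mathrm{T}\,\bar{\bfp}_i-\bar{\bfs}(t_i)$ for $i=l,\dots,n$. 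Logan-GA instead fits $\bar{\bfy}_i\approx\mathrm{DV}_\mathrm{L}\,\bar{\bfp}_i-b$ by LS over these frames, i.e.\ it collapses the varying term $\bar{\bfs}(t_i)$ to a single constant $b$. Using the pairwise-difference form of the OLS slope,
\[
\mathrm{DV}_\mathrm{L}=\frac{\displaystyle\sum_{l\le i<j\le n}(\bar{\bfy}_i-\bar{\bfy}_j)(\bar{\bfp}_i-\bar{\bfp}_j)}{\displaystyle\sum_{l\le i<j\le n}(\bar{\bfp}_i-\bar{\bfp}_j)^2},
\]
and substituting $\bar{\bfy}_i-\bar{\bfy}_j=\mathrm{DV}_\mathrm{T}(\bar{\bfp}_i-\bar{\bfp}_j)-(\bar{\bfs}(t_i)-\bar{\bfs}(t_j))$, the $\mathrm{DV}_\mathrm{T}$ part cancels the denominator and one is left with the exact bias identity
\[
\mathrm{DV}_\mathrm{L}-\mathrm{DV}_\mathrm{T}=-\,\frac{\displaystyle\sum_{l\le i<j\le n}(\bar{\bfs}(t_i)-\bar{\bfs}(t_j))(\bar{\bfp}_i-\bar{\bfp}_j)}{\displaystyle\sum_{l\le i<j\le n}(\bar{\bfp}_i-\bar{\bfp}_j)^2}.
\]
The MA0 claim follows the same way, reading (\ref{eq:sfunc}) directly as a two-predictor regression of $\int_0^{t_i}\CT$ on $\int_0^{t_i}\Cp$ and $\CT(t_i)$; the DV error is again an explicit covariance of $\bar{\bfs}$ against the data, which is the computation spelled out in the Appendix.

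\noindent The denominator is strictly positive unless all $\bar{\bfp}_i$ are equal (the nondegeneracy already needed for the LS problem), so the sign of the bias is the opposite of the sign of the numerator. On the analysis window $[t_l,t_n]$ the Logan abscissa $\bar{\bfp}(t)=\int_0^t\Cp(\tau){\rm d}\tau/\CT(t)$ is increasing in $t$ — the input integral is nondecreasing and $\CT$ is past its peak and declining — so $\bar{\bfp}_i$ is increasing in $i$ and $\bar{\bfp}_i-\bar{\bfp}_j<0$ for $i<j$. If $\bar{\bfs}(t)$ is decreasing on $[t_l,t_n]$ then $\bar{\bfs}(t_i)-\bar{\bfs}(t_j)\ge 0$ for $i<j$, so every summand of the numerator is $\le 0$; it vanishes identically only when $\bar{\bfs}$ is constant on the window, and then $\bar{\bfy}_i=\mathrm{DV}_\mathrm{T}\bar{\bfp}_i-b$ is exactly affine so LS returns $\mathrm{DV}_\mathrm{L}=\mathrm{DV}_\mathrm{T}$, whereas for non-constant $\bar{\bfs}$ the numerator is strictly negative and $\mathrm{DV}_\mathrm{L}>\mathrm{DV}_\mathrm{T}$. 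The increasing-$\bar{\bfs}$ case is the mirror image (under-estimation). This step is just Chebyshev's sum inequality applied to the oppositely (resp.\ similarly) ordered sequences $(\bar{\bfs}(t_i))$ and $(\bar{\bfp}_i)$.

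\noindent For the quantitative bound (\ref{LoganDV_bound}) I estimate the numerator of the bias identity termwise. Since every $t_i\in[t_l,t_n]$, the definition (\ref{eq:var}) of the variation gives $|\bar{\bfs}(t_i)-\bar{\bfs}(t_j)|\le V(\bar{\bfs}(t))$; and since each $\bar{\bfp}_i\ge 0$, $|\bar{\bfp}_i-\bar{\bfp}_j|\le\bar{\bfp}_i+\bar{\bfp}_j$. Hence the numerator is at most $V(\bar{\bfs}(t))\sum_{l\le i<j\le n}(\bar{\bfp}_i+\bar{\bfp}_j)=(n-l)\,V(\bar{\bfs}(t))\sum_{i=l}^{n}\bar{\bfp}_i\le(n-l+1)\,V(\bar{\bfs}(t))\sum_{i=l}^{n}\bar{\bfp}_i$, and dividing by the squared pairwise differences of the $\bar{\bfp}_i$ yields (\ref{LoganDV_bound}). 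The real obstacle is not any single computation but the two facts propping it up: (i) the monotonicity of $\bar{\bfp}_i$ underlying the sign statements, which must be extracted from the behaviour of $\Cp$ and $\CT$ on the linear window (or taken as part of the standard Logan-plot hypotheses); and (ii) keeping the bookkeeping tight enough to land on the precise constant in (\ref{LoganDV_bound}), in particular the convention for the pair-sum in its denominator — it is the pairwise-difference form of the slope, rather than the mean-centred form with cruder intermediate bounds, that keeps the coefficient sharp. Once the exact bias identity is in hand, everything after it is elementary estimation.
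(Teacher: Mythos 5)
Your argument is correct in substance and takes essentially the same route as the paper: the exact sampled relation $\bar{\bfr}_i=\mathrm{DV}_\mathrm{T}\bar{\bfp}_i-\bar{\bfs}(t_i)$, an exact expression for the bias as a covariance-type sum of $\bar{\bfs}$ against $\bar{\bfp}$, the sign read off from Chebyshev's sum inequality, and the magnitude from a max--min estimate. Your pairwise-difference form of the LS slope is simply a cleaner packaging of what the paper does through the normal-equations closed form and the induction of Lemma~\ref{lemma:baseineq} (which, for $\bfq=\bfe$, is exactly the Chebyshev step you invoke); your monotonicity claim for $\bar{\bfp}_i$ matches the paper's stated hypothesis. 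Two remarks. First, the MA0 half is only gestured at: the paper treats it via Lemma~\ref{lemma:MA1_ana}, i.e.\ the weighted problem $\bfp x-b\bfq\approx\bfr$ with $\bfq$ the decreasing samples of $\CT$, where the sign argument uses the products $(q_ip_j-q_jp_i)\,q_iq_j\bigl(\tfrac{s_j}{q_j}-\tfrac{s_i}{q_i}\bigr)$ and the monotonicity of $\bfs/\bfq=\bar{\bfs}$ together with $\bfq$ decreasing; your ``follows the same way'' is right in spirit, but this weighted version is the one piece you did not actually write out. Second, the constant: your final inequality has denominator $\sum_{l\le i<j\le n}(\bar{\bfp}_i-\bar{\bfp}_j)^2$, i.e.\ it is twice the right-hand side of (\ref{LoganDV_bound}) as printed, whose denominator runs over ordered pairs $i\neq j$. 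This is not a defect of your proof: the simplification the paper invokes, $(n-l+1)\sum_i\bar{\bfp}_i^2-(\sum_i\bar{\bfp}_i)^2=\sum_{i\neq j}(\bar{\bfp}_i-\bar{\bfp}_j)^2$, is off by a factor of two (the left side equals half the right side), and the bound actually established in the Appendix, (\ref{DV_error}), is precisely the one you obtained. So your bookkeeping lands on the provable constant, and the printed (\ref{LoganDV_bound}) is stronger by that same factor of two than what either derivation supports.
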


This theorem is an immediate result of Lemma~\ref{lemma:MA1_ana} and  Corollary~\ref{cor:Logan-GA} in the  {\em Appendix} for the vectors obtained from the sampling of the functions 
$$\bfs(t)=\frac{k_3+k_4}{k_2k_4}C_{F+NS}(t)+\frac{k_2+k_3+k_4}{k_2k_4}\CS(t),\,\, \mathrm{and}$$ 
$$\bfr(t)=\int_0^t\CT(\tau){\rm d}\tau,\quad  \bfp(t)=\int_0^t\Cp(\tau){\rm d}\tau, \quad \bfq(t)=\CT(t),$$
at discrete time points $t=t_l,\cdots, t_n.$  The relevant vectors are defined by 
$\bar{\bfr}=\bfr/\bfq$, $\bar{\bfp}=\bfp/\bfq$,  $\bar{\bfs}=\bfs/\bfq$, where the division corresponds to componentwise division. It is easy to check that all these vectors are positive vectors, $\bfp$, $\bar{\bfp}$, $\bfr$ and $\bar{\bfr}$ are non-constant increasing vectors and $\bfq$ is decreasing. Thus all conditions for Lemma~\ref{lemma:MA1_ana} and  Corollary~\ref{cor:Logan-GA} are satisfied. Note that in the denominator of (\ref{LoganDV_bound}) the simplification $(n-l+1)\sum_{i=l}^n(\bar{\bfp}_i)^2-(\sum_{i=l}^n\bar{\bfp}_i)^2=\sum_{i\neq j, l\le i, j\le n} (\bar{\bfp}_i-\bar{\bfp}_j)^2$ is used. In the latter discussion we may use the variation (increasing or decreasing) of $\CS(t)/\CT(t)$ instead of that of $\bar{\bfs}(t)$ because 
$$\bar{\bfs}(t)=\frac{k_3+k_4}{k_2k_4}+\frac{1}{k_4}\CS(t)/\CT(t).$$

It is not surprising that the properties of Logan-GA and MA0 are similar. Indeed, MA0 is none other than weighted Logan-GA with weights $\CT(t_i)$, which changes the noise structure in the variables. In contrast to the conventional under-estimation observations, it is suprising that the DV may be over-estimated.  However, the over-estimation is indeed observed in the tests  presented in Section \ref{sec:overest} and \ref{sec:resultsSc0}. Inequality (\ref{LoganDV_bound}) indicates that Logan-type linear methods will work well for data for which  $V(\bar{\bfs})$ is flat.  Unfortunately,  $V(\bar{\bfs})$ may become flat only for a late time interval. Thus our interest, in Section~\ref{sec:method}, is to better estimate the DV using a reasonable (practical) time window, which may include the window over which $\CS(t)/\CT(t)$ is still increasing. Our initial focus is on the modification of Logan-type methods. Then, in Section~\ref{sec:validatetheory} we present numerical simulations using noise-free data which illustrate the difficulties with Logan-GA and MA1, and support the results of Theorem~\ref{thm:Logan_bias}.

\section{Methods} \label{sec:method}
In the previous discussion we have seen the theoretical limitations of the Logan-GA and MA1 methods. Here we present a new model and associated algorithm which assists with reducing the bias in the estimation of the DV. 

Observe that, $\alpha_2>> \alpha_1$, implies that $C_S=a_2e^{-\alpha_1t} \otimes \Cp(t)+\epsilon(t)$, where $\epsilon(t)$ can be ignored for $t>t^*$. Therefore, for $t>t^*$  (\ref{eq:root})  can be approximated by a new model as follows
\begin{equation}\label{eq:root1}
 \int_0^t\CT(\tau){\rm d}\tau \approx DV\int_0^t\Cp(\tau){\rm d}\tau-A\CT(t)-Be^{-\alpha_1t} \otimes \Cp(t),
\end{equation}
where $A={(k_3+k_4)}/{k_2k_4}$ and $B={a_2}/{k_4}$. This suggests new algorithms should be developed for estimation of parameters DV, $A$, $B$ and $\alpha_1$. Here, a new approach, based on  the basis function method (BFM)  in \cite{Gunn97BFM}, in which $\alpha_1$ is discretized, is given by the following Algorithm.

\begin{alg}\label{alg:Biascor_Guo}Given $\Cp(t_i)$ and $\CT(t_i)$ for $i=1,\cdots,n$ and $t^*=t_l$, the DV is estimated by performing the following steps.
\begin{enumerate}
\item Calculate  $\mathrm{DV}$ and intercept $-b$, using Logan-GA. 
\item Set $\alpha_1^{\mathrm{min}}=0.001$ and $\alpha_1^{\mathrm{max}} =\min(1,2/b)$ if $b>0$ otherwise $\alpha_1^{\mathrm{max}}=1$. 
\item Form discretization $\alpha_1^{(j)}$, $j=1:100$ for  $\alpha_1$, with equal spacing logarithmically between $\alpha_1^{\mathrm{min}}$ and $\alpha_1^{\mathrm{max}}$. 
\item For each $j$ solve the linear LS problem, i.e. cast it as a multiple linear regression problem with $\int_0^t C_T(\tau) \, d\tau$ as the dependent variable.
\begin{equation}
 DV\int_0^{t}\Cp(\tau){\rm d}\tau-A\CT(t) -B \int_0^{t} e^{-\alpha^{(j)}_1 \tau} \Cp(t-\tau){\rm d}\tau\approx \int_0^{t}\CT(\tau){\rm d}\tau \label{eq:LSnewmodel}
\end{equation}
 with data at $t_i$, $i=l,\cdots,n$, to give values  $\mathrm{DV}^{(j)}$, $A^{(j)}$ and $B^{(j)}$.: 
\item Determine $\alpha_1^{(j^*)}$ for which residual is minimum over all $j$. Set  $\mathrm{DV}$,  $A$ and $B$  to be  $\mathrm{DV}^{(j^*)}$, $A^{(j^*)}$ and $B^{(j^*)}$, resp. 
\end{enumerate}
\end{alg}

{\bf Remarks:}\\
\begin{enumerate}
\item \label{remark:alpha1} The interval for $\alpha_1$ is determined as follows: First the lower bound $0.001$ for $\alpha_1$ is suitable for most tracers, but could be reduced appropriately.  This lower bound is not the same as that on  $\theta$ used in BFM, in which $\theta$ is required to be greater than the decay constant of the isotope, \cite{Gunn97BFM}.  Second by point (\ref{item:est_b}) of Corollary~\ref{cor:Logan-GA} in  the {\em Appendix A}, $b$ should be positive and near the average value of $\bar{\bfs}(t)$, where, by (\ref{ineq:bounds_s}), $\frac{k_3+k_4}{k_2k_4}<\bar{\bfs}(t)<\frac{k_2+k_3+k_4}{k_2k_4}$. On the other hand,   $\frac{k_2+k_3+k_4}{k_2k_4}\approx \frac{1}{\alpha_1}$ if $4k_2k_4$ is small relative to $(k_2+k_3+k_4)^2$. Thus, $\alpha_1$ is linked with $b$ through $\bar{\bfs}(t)$. This is used to give the estimate of the upper bound on $\alpha_1$. Practically, it is possible that the Logan-GA may yield an intercept $b<0$, then  we set $\alpha_1^{\mathrm{max}}=1$. 

\item  Numerically, because $\int_0^{t}\Cp(\tau){\rm d}\tau$ is much larger than both $\CT(t)$ and $\CS(t)$ for $t>t^*$, the estimate of DV is much more robust to noise in the formulation, including both model and random noise effects, than are the estimates of $A$ and $B$.  Therefore, while $A$ and $B$ may not be good estimates of $(k_3+k_4)/(k_2k_4)$ and $a_2/k_4$, resp. for noisy data, the estimate of DV will still be acceptable. Consequently, it is possible that Logan-GA and MA0 will produce reasonable estimates for DV, even when the model error is non negligible.   
\item The algorithm can be accelerated by employing a coarse-to-fine multigrid strategy. The coarser level grid provides bounds for the fine level grid. The grid resolution can be gradually refined until the required accuracy is satisfied.
\end{enumerate}

\section{Experimental Results}\label{sec:validatetheory}
We present a series of simulations which first validate the theoretical analysis of Section~\ref{sec:theory} for noise-free data, and then numerical experiments which contrast the performance of  Algorithm~\ref{alg:Biascor_Guo} with  Logan-GA,  MA1 and nonlinear kinetic analysis (KA) algorithms for noisy data.
\subsection{Simulated Noise-Free Data}\label{sec:bias}
We assume the radioligand binding system is well modeled by the 2T-4k compartmental model and focus the analysis on the bias in the estimated DV which can be attributed to the simplification of the 2T-4k model. For the simulation we use representative kinetic parameters for brain studies with the PIB tracer. These kinetic parameters, detailed in Table~\ref{tab:ki}, are adopted from published clinical data, \cite{Price05, Yaqub08PIB}. The simulated regions include the posterior cingulate (PCG), cerebellum (Cere) and a combination of cortical regions (Cort). The kinetic parameters of each ROI are also associated with the subject medical condition, namely   normal controls (NC) and  Alzheimer's Disease (AD) diagnosed subjects. The kinetic parameters for the first seven ROIs are from 
\cite{Price05} while the last four are from \cite{Yaqub08PIB}. Rate constants for ROIs~\textbf{5} to ~\textbf{11}  are directly adopted from the published literature, while those for  ROIs~\textbf{1} to \textbf{4} are rebuilt from information provided in \cite{Price05}. The values for ROIs~\textbf{1} to \textbf{4} and \textbf{8} to \textbf{11} represent average values for each group, while those for ROIs~\textbf{5} and \textbf{6} are derived from one AD subject and those for ROI~\textbf{7} from another AD subject. 

\begin{table}[htbp]
\begin{center}
\caption{Rate constants for eleven ROIs, including PCG,  Cere, and Cort,  for AD and NC  adopted from \cite{Price05, Yaqub08PIB}. For ROIs~\textbf{6}, \textbf{7}, \textbf{10} and \textbf{11} no specific binding activity is assumed,  i.e. $k_3=k_4=0$, $DV=K_1/k_2(1+k_5/k_6)$; while for ROIs~\textbf{1} to \textbf{5}, \textbf{8} and \textbf{9}  we assume that the free and nonspecific compartments rapidly reach equilibrium, i.e. $k_5=k_6=0$,  $DV=K_1/k_2(1+k_3/k_4)$. Coefficients $a_1, b_1$ and $a_2$ are defined in (\ref{symb:a1}). The values for ROIs~\textbf{1} to \textbf{4} and \textbf{8} to \textbf{11} represent average values for each group, while those for ROIs~\textbf{5} and \textbf{6} are derived from one AD subject and those for ROI~\textbf{7} from another AD subject.  \label{tab:ki}}
\begin{tabular}{|c|c|c|c|c|c|c|c|c|c|}
\hline
\textbf{ROI/Group} &Area &\textbf{$K_1$}&\textbf{$k_2$}&\textbf{$k_3$ }&\textbf{$k_4$}&\textbf{$k_5$ }&\textbf{$k_6$}&\textbf{DV}& $\frac{a_1+a_2}{b_1-a_2}$\\\hline
\textbf{1}/\textbf{NC}&\textbf{Cort}&0.250&0.152&0.015&0.0106&0&0&3.9722&0.11 \\\hline
\textbf{2}/\textbf{AD}&\textbf{Cort}&0.220&0.113&0.056&0.023&0&0&6.6872&0.65\\\hline
\textbf{3}/\textbf{NC}&\textbf{PCG }&0.250&0.150&0.015&0.0106&0&0&4.0252&0.11\\\hline
\textbf{4}/\textbf{AD}&\textbf{PCG }&0.220&0.100&0.050&0.017&0&0&8.6706&0.63\\\hline
\textbf{5}/\textbf{AD}&\textbf{PCG }&0.262&0.121&0.044&0.015&0&0&8.5168&0.44\\\hline
\textbf{6}/\textbf{AD}&\textbf{Cere}&0.273&0.144&0&0&0.007&0.005&4.5500&0.05\\\hline
\textbf{7}/\textbf{AD}&\textbf{Cere}&0.333&0.172&0&0&0.029&0.042&3.2728&0.26\\\hline
\textbf{8}/\textbf{NC}&\textbf{Cort}& 0.250& 0.140&0.020&0.018&0&0& 3.7480& 0.18\\\hline
\textbf{9}/\textbf{AD}&\textbf{Cort}&0.220 & 0.110&0.050&0.025&0&0& 5.9841 &0.63\\\hline
\textbf{10}/\textbf{NC}&\textbf{Cere}&0.270&0.140&0&0& 0.020&0.026& 3.4353&0.20\\\hline
\textbf{11}/\textbf{AD}&\textbf{Cere}&0.260&0.130&0&0& 0.020&0.025& 3.5810&0.22\\\hline
\end{tabular}
\end{center}
\end{table}

The noise-free decay-corrected  input function is adapted from the plasma measurements for a NC subject as   presented in  Figure~3(A) of \cite{Price05}. Using  the data from that figure we convert to kBq/ml under the assumption of a $100$kg body mass, and obtain the functional representation  for $\Cp(t)=u(t)$, (kBq/ml), which is illustrated in Figure~\ref{fig:inpf}:
\begin{equation}\label{inpf.eq}
u(t)=\left\{\begin{array}{ll}
0 & t\in[0,0.3]\\
407.4933(t-0.3)& t \in [0.3, 0.6]\\  
-436.6t+  384.208 & t \in [0.6, 0.76]\\
46.6747(t+0.24)^{-2.2560}+ 5.7173(t+0.24)^{-0.5644}  & t \ge 0.76.
\end{array}\right.
\end{equation}

\begin{figure}[htbp]
\centerline{\includegraphics[scale=.45]{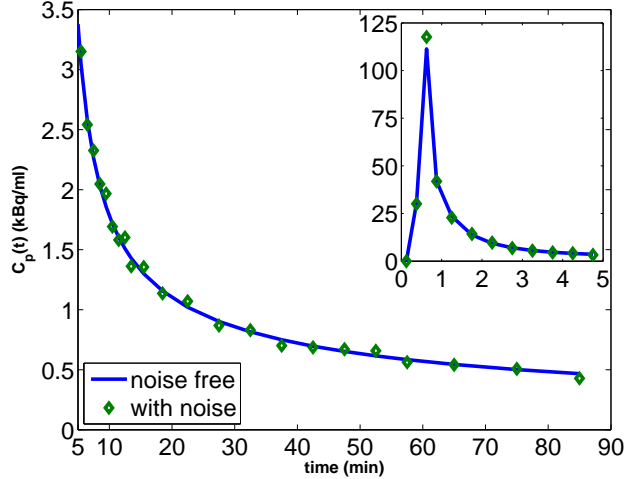}}
\caption{The true input function as given by (\ref{inpf.eq}), and the simulated measurements with noise. The simulated measurements are generated by (\ref{eq:noisyinput}) with C$V_\mathrm{S}=0.05$,  $e=50\%$, $\mu=0.5$ml and  $\Delta w_i=100$ seconds. The function over the initial 5 minutes is illustrated in the inset.
\label{fig:inpf}}
\end{figure}

Using this input function and the eleven data sets given in Table~\ref{tab:ki} eleven noise-free TTACS, $\CT(t)$ (kBq/ml),  are generated using the 2T-4k model. The scanning protocol, consistent with that adopted in \cite{Price05}, has frame durations, $\Delta t_i$, measured in minutes, $4\times 0.25$, $8\times 0.5$, $9 \times 1$, $2\times 3$, $8\times 5$ and $3\times 10$.  The last eight frames, which fall in the window from  $35$ to $90$ minutes, are chosen for the time window over which we assume that equilibrium is achieved. A scan duration of $90$ minutes is common for most PIB-PET dynamic studies, \cite{Muetal:05}.



\subsection{Examples of over-estimation for Logan-GA and MA1}\label{sec:overest}
Theorem \ref{thm:Logan_bias} predicts that the DV will be over-estimated when $\bar{\bfs}$  decreases.  
This is validated for  data for the  simulated ROIs. The estimates of the DV, for scan durations $T=90$ minutes with $t^*=35$ minutes, and $T=240$ minutes with $t^*=100$ minutes, are reported in Table~\ref{tab:overest}. The extended time window is generated by adding $15$ frames each of $10$ minutes length. Indeed, the over-estimation predicted in Theorem~\ref{thm:Logan_bias} is confirmed for ROI~\textbf{7}, for which the decrease of $\CS(t)/\CT(t)$ and, hence $\bar{\bfs}$ after $35$ minutes, is clearly illustrated in  Figure \ref{fig:CS2CT}. Moreover,  $\CS(t)/\CT(t)$ is decreasing after $100$ minutes for all ROIs except ROI~\textbf{6}, see Figure~\ref{fig:CS2CT720}, and in all but this case the values of DV are over-estimated. We note that $\bar{\bfs}$ is nearly flat on the selected windows, $[t^*, T]$ for the cases in which the over-estimation of DV is small. These results further validate the conclusions of Theorem~\ref{thm:Logan_bias}. Additionally, the use of the long scan duration of $240$ minutes leads to estimates with less overall bias because the variation in $\CS(t)/\CT(t)$ is smaller over $[100 min., 240 min.]$ than over the earlier window. Equivalently, as given by (\ref{LoganDV_bound}), a  small variation in $\bar{\bfs}$ guarantees a small error in the estimated DV. Clearly, linear methods based on the MA0 model work well during the equilibrium phase. Unfortunately, this equilibrium may be reached too late for practical application, see for example ROI~\textbf{6} in Figure~\ref{fig:CS2CT720}, for which approximate equilibrium is not reached until $3$ hours. The results with $90$ minutes scan duration show that better estimates are obtained for larger  $(a_1+a_2)/(b_1-a_2)$, which consistently supports the analysis in Section~\ref{subsec:errana}. 

In these simulations the accurate data and integrals are used so as to assure that the results are not impacted by use of a low accuracy numerical quadrature but instead are focused on the effects of the model error of Logan-GA and MA1.  It is  interesting to note, however, that the error introduced by the numerical quadrature always lowers the estimate of  the DV, see Section~\ref{subsec:quad_effect}. Moreover, the noise from other sources may  have a similar impact. This is a topic for future research.

\begin{table}[htbp]
\begin{small}
\begin{center}
\caption{The DV calculated using  Logan-GA and MA1 with noise-free data and accurate integrals. DV is calculated  for scan durations  $T=90$ minutes with $t^*=35$ minutes, and $T=240$ minutes with $t^*=100$ minutes. The percentage bias is listed in parentheses. \label{tab:overest}}
\begin{tabular}{|c|c|c|c|c|c|}
\hline
ROI & True & \multicolumn{2}{|c|}{$35$-$90$ min} &\multicolumn{2}{|c|}{$100$-$240$ min} \\ \cline{3-6}
ID  & DV & Logan-GA & MA1 & Logan-GA &MA1 \\\hline

\textbf{1} & $3.9722$  & $3.549$($-10.65$\%)& $3.542$($-10.84$\%)& $3.981$($0.22$\%)& $3.977$($0.12$\%)\\ \hline 
\textbf{2} & $6.6872$  & $6.585$($-1.53$\%)& $6.577$($-1.65$\%)& $6.709$($0.33$\%)& $6.709$($0.33$\%)\\ \hline 
\textbf{3} & $4.0252$  & $3.593$($-10.73$\%)& $3.586$($-10.92$\%)& $4.034$($0.22$\%)& $4.030$($0.11$\%)\\ \hline 
\textbf{4} & $8.6706$  & $8.342$($-3.79$\%)& $8.331$($-3.92$\%)& $8.687$($0.19$\%)& $8.685$($0.16$\%)\\ \hline 
\textbf{5} & $8.5168$  & $8.129$($-4.55$\%)& $8.117$($-4.69$\%)& $8.536$($0.23$\%)& $8.533$($0.19$\%)\\ \hline 
\textbf{6} & $4.5500$  & $3.204$($-29.58$\%)& $3.208$($-29.50$\%)& $4.281$($-5.91$\%)& $4.273$($-6.10$\%)\\ \hline 
\textbf{7} & $3.2728$  & $3.300$($0.82$\%)& $3.298$($0.76$\%)& $3.286$($0.41$\%)& $3.288$($0.45$\%)\\ \hline 
\textbf{8} & $3.7480$  & $3.635$($-3.01$\%)& $3.625$($-3.28$\%)& $3.780$($0.84$\%)& $3.779$($0.84$\%)\\ \hline 
\textbf{9} & $5.9841$  & $5.910$($-1.23$\%)& $5.902$($-1.37$\%)& $6.007$($0.38$\%)& $6.007$($0.39$\%)\\ \hline 
\textbf{10} & $3.4353$  & $3.416$($-0.57$\%)& $3.408$($-0.78$\%)& $3.462$($0.77$\%)& $3.463$($0.80$\%)\\ \hline 
\textbf{11} & $3.5810$  & $3.552$($-0.81$\%)& $3.544$($-1.04$\%)& $3.608$($0.75$\%)& $3.609$($0.79$\%)\\ \hline

\end{tabular}
\end{center}
\end{small}
\end{table}

\subsection{Algorithm Performance for Noise-Free Data}\label{sec:resultsSc0}
We contrast the performance of Algorithm~\ref{alg:Biascor_Guo} with Logan-GA, MA1 and KA for noise-free data. The use of a long scan duration (up to $90$ minutes) is to assure that equilibrium is achieved as needed for  GA methods.  For a method for which the bias due to model error is not impacted by the need for equilibrium, a  shorter scan duration is preferred. For the results presented in Table~\ref{tab:Sc0} the DV is calculated for the noise-free case over a  scan duration of just $70$ minutes with $t^*=35$ minutes. Accurate integrals are used so as  to focus the conclusions on the impact of the  model error. 

The KA solutions were obtained using two different optimization algorithms for the solution of the highly nonlinear problem, the interior point   and the Marquardt-Levenberg methods, Matlab{\textregistered} functions \texttt{fmincon} and \texttt{lsqnonlin}, resp. In order to provide the most fair comparison the results presented are for \texttt{fmincon}, which gave the better solutions. The KA solution is very dependent on provision of a good initial value. If the initial values of $k_3$ and $k_4$ are taken very close to their true values,  the estimate of the DV may be nearly perfect. Here we use  initial values for $K_1$, $k_2$, $k_3$ and $k_4$ set to $[0.2$, $0.1$, $0.01$, $0.001]$.  

For Logan-GA and MA1, solutions were also calculated for the scan duration of $T=90$ minutes with $t^*=35$ minutes as illustrated in Table~\ref{tab:overest}. The KA results, not given, which do not require the attainment of equilibrium were comparable for both scan durations as expected. This independence with respect to the requirement of attainment of equilibrium was also observed for  Algorithm~\ref{alg:Biascor_Guo} except for ROI~\textbf{6}. In this case the neglected  part in model (\ref{eq:root1}) is relatively large as compared to that for the other ROIs, i.e. the ratio of $e^{-\alpha_2t}\otimes \Cp(t)$ to $e^{-\alpha_1t}\otimes \Cp(t)$ for ROI~\textbf{6} is greater than that for the other ROIs. A significant reduction in the bias for ROI~\textbf{6} from $-12.71\%$ ($70 $ min.)  to  $-7.39\%$ ($90$ min.) was observed. It is clear, by comparing the results with those in Table~\ref{tab:overest}, that Algorithm~\ref{alg:Biascor_Guo} for a scan duration of just $70$ minutes is much more accurate for the calculation of the DV  than are  Logan-GA and MA1 using scan durations of $90$  minutes.

\begin{table}[bhtp]
\begin{small}
\begin{center}
\caption{DV calculated by Logan-GA, MA1, KA and Algorithm~\ref{alg:Biascor_Guo} for a $70 $ minutes scan duration with $t^*=35$ minutes.  In each case the percentage bias is listed in parentheses. \label{tab:Sc0}}
\begin{tabular}{|c|c|c|c|c|}
\hline
ROI & Logan-GA & MA1 & KA & Algorithm~\ref{alg:Biascor_Guo}\\\hline
\textbf{1} & $3.395$($-14.54$\%)& $3.392$($-14.61$\%)& $3.928$($-1.12$\%)& $4.014$($1.05$\%) \\ \hline 
\textbf{2} & $6.511$($-2.64$\%)& $6.506$($-2.71$\%)& $6.552$($-2.02$\%)& $6.777$($1.34$\%) \\ \hline 
\textbf{3} & $3.436$($-14.65$\%)& $3.433$($-14.71$\%)& $3.982$($-1.08$\%)& $4.066$($1.00$\%) \\ \hline 
\textbf{4} & $8.163$($-5.86$\%)& $8.157$($-5.92$\%)& $8.535$($-1.56$\%)& $8.743$($0.83$\%) \\ \hline 
\textbf{5} & $7.931$($-6.88$\%)& $7.925$($-6.95$\%)& $8.383$($-1.57$\%)& $8.530$($0.16$\%) \\ \hline 
\textbf{6} & $3.004$($-33.97$\%)& $3.007$($-33.92$\%)& $4.675$($2.74$\%)& $3.972$($-12.71$\%) \\ \hline 
\textbf{7} & $3.293$($0.63$\%)& $3.292$($0.58$\%)& $3.188$($-2.59$\%)& $3.277$($0.12$\%) \\ \hline 
\textbf{8} & $3.555$($-5.15$\%)& $3.549$($-5.30$\%)& $3.679$($-1.84$\%)& $3.784$($0.95$\%) \\ \hline 
\textbf{9} & $5.847$($-2.28$\%)& $5.842$($-2.37$\%)& $5.859$($-2.10$\%)& $6.008$($0.40$\%) \\ \hline 
\textbf{10} & $3.376$($-1.73$\%)& $3.371$($-1.87$\%)& $3.361$($-2.17$\%)& $3.451$($0.47$\%) \\ \hline 
\textbf{11} & $3.506$($-2.09$\%)& $3.501$($-2.24$\%)& $3.505$($-2.11$\%)& $3.585$($0.10$\%) \\ \hline 
\end{tabular}
\end{center}
\end{small}
\end{table}

In contrasting the results with respect to only the bias in the calculation of the DV it is clear that Algorithm~\ref{alg:Biascor_Guo}  leads to significantly more robust solutions than Logan-GA1 and MA1   for noise-free data. On the other hand,  the KA approach can lead to very good solutions, comparable and perhaps marginally better than  Algorithm~\ref{alg:Biascor_Guo}.  For ROI~\textbf{6}, for which the KA solution is significantly better, we recall that the solution depends on the initial values of the parameters. Changing the  initial $k_6$ to $0.01$, 
the resulting bias in the DV of ROI~\textbf{6} calculated by KA is increased to $31.75\%$. On the other hand,  
Algorithm~\ref{alg:Biascor_Guo} is not dependent on specifying initial values, and is thus more computationally robust.  


\subsection{Experimental Design for Noisy Data}
While the results with noise-free data support the use of Algorithm~\ref{alg:Biascor_Guo}, it is more critical to assess its performance for noise-contaminated simulations. The experimental evaluation for noisy data is based on the noise-free input $u(t)$ and noise-free output  $\CT(t)$, one output TTAC for each of the eleven  parameter sets given in Table~\ref{tab:ki}. Noise contamination of the input function and these TTACs is obtained as follows.

\subsubsection{The Noise-Contaminated TTAC Data}
For a given noise-free  decay-corrected concentration TTAC, $\CT(t)$, Gaussian   ($G(0,\sigma(\CT(t))$) noise at each time point $t_i$ is modeled using the approach in \cite{logan2001a,varga2002mod, ichise2002str}. The standard deviation in the noise at each time point $t_i$, depends on  the frame time interval $\Delta t_i$ in seconds, the tracer decay constant $\lambda$ ($0.034$ for $^{11}C$) and a scale factor $Sc$
\begin{equation}\label{var_y}
\sigma(\CT(t_i))=Sc\sqrt{ \frac{\CT(t_i) e^{\lambda t_i}}{\Delta t_i}}.
\end{equation}
The resulting coefficients of variation $\mathrm{CV}_{\mathrm{T}}$ (ratio $\sigma(\CT(t_i))$ to 
$\CT(t_i)$), for scale factors $1$ and $2$, are illustrated in Figure~\ref{fig:CV}.
\begin{figure}[tbp]
\centerline{\subfigure[]{
\includegraphics[scale=.45]{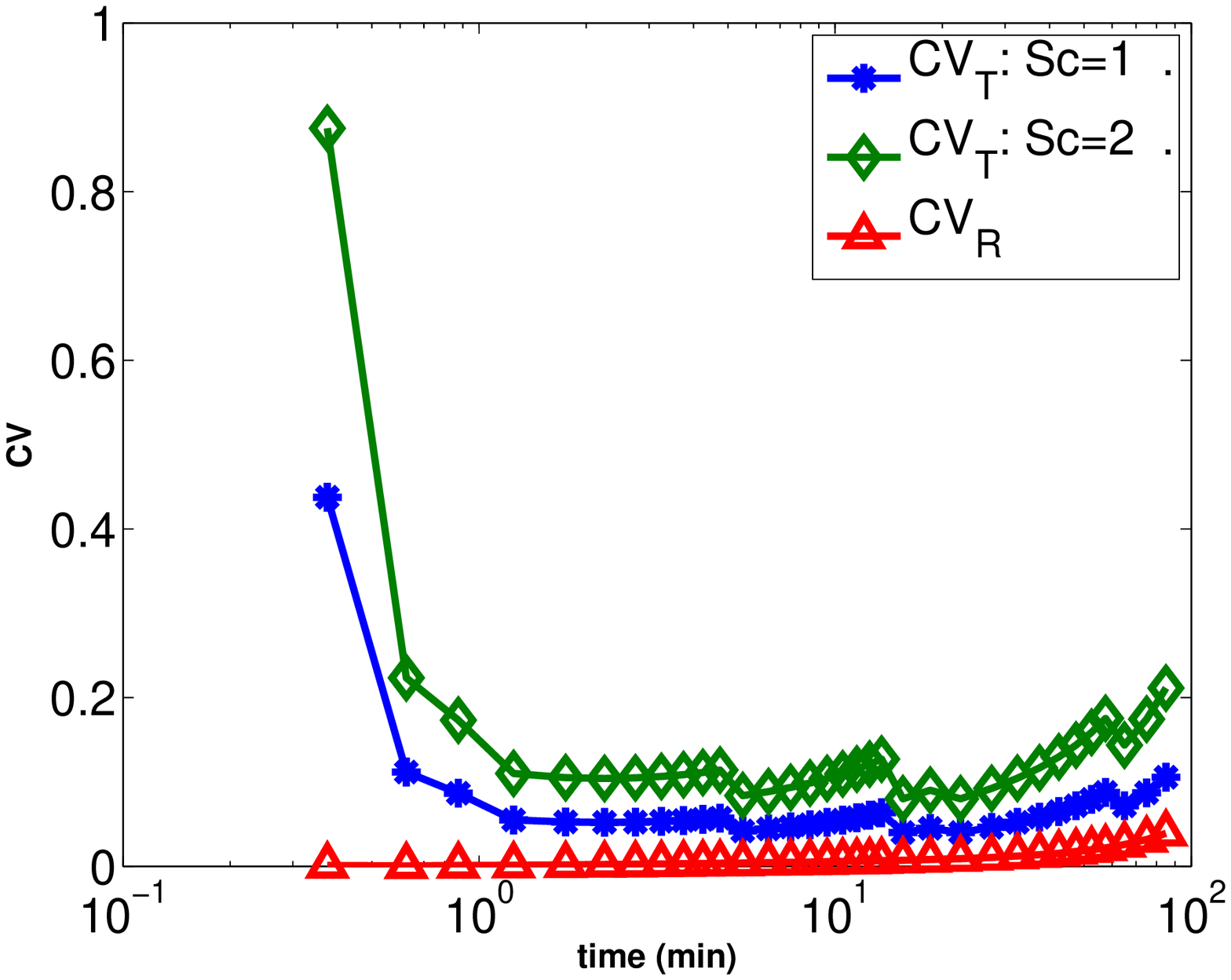}}
\subfigure[]{\includegraphics[scale=.45]{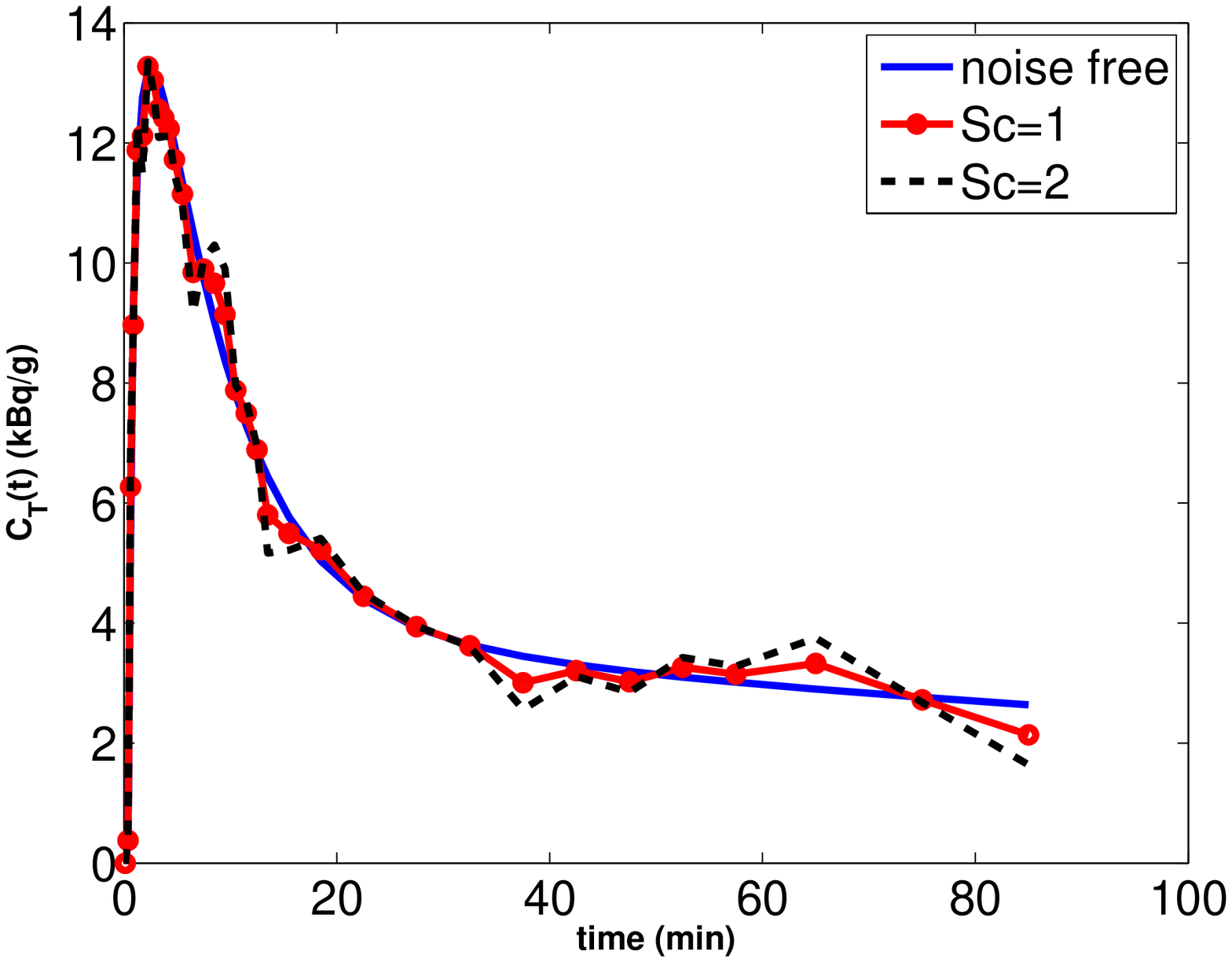}}
}
\caption{(a) The coefficients of variation CV$_{\mathrm{T}}$ for  the noisy TTAC associated with ROI~\textbf{3}, obtained with  $Sc=1$ and $Sc=2$, resp.  and CV$_{\mathrm{R}}$, for the input function calculated for  $e=50\%$, $\mu=0.5$ml and  $\Delta w_i=100$ seconds. (b) The noise-free and noisy TTACs for ROI~\textbf{3} obtained with $Sc=1$ and $Sc=2$, resp.}
\label{fig:CV}
\end{figure}

\subsubsection{The Noise-Contaminated Input Function}
The noise in the input function can be attributed to two sources, system and random noise. Although the random $\gamma$-ray emission follows a Poisson distribution, we  use the limiting result that a large mean Poisson distribution is approximately  Gaussian to model this randomness as Gaussian. Thus both sources are modeled as Gaussian but with different variance. Consider first the following model for determining the randomness of the $\gamma-$ray emissions. Suppose a $\mu$ ml blood sample is placed  in a $\gamma$-ray well counter which has efficiency $e$ and the measured counts over $\Delta w_i$ seconds are $n(t_i)$.  Then the measured decay corrected concentration (kBq/ml) is 
$$\Cp(t_i)=\frac{n(t_i)e^{\lambda t_i}}{1000\Delta w_i \mu e},$$
where $1000$ is a normalization factor to convert the counts to ``kilo'' counts. 
Then, assuming that the mean of $\Cp(t_i)$ (or its true value) is  $u(t_i)$ as given in (\ref{inpf.eq}), the 
standard deviation in the measurement of $\Cp(t_i)$ due to random effects  is  $\sigma_{\mathrm{R}}(\Cp(t_i))=\sqrt{u(t_i)e^{\lambda t_i}/(1000\Delta w_i  \mu e)}$. The coefficient of variation, C$V_\mathrm{R}=\sigma_{\mathrm{R}}(\Cp(t_i))/u(t_i)$, which results from this  random noise is shown in Figure~\ref{fig:CV}. It  is assumed in the experiments that each blood sample has volume $\mu=0.5ml$, the count duration is $\Delta w_i=100$ seconds and the well counter efficiency is $e=50\%$. Then, denoting the coefficient of variation due to system noise by C$V_\mathrm{S}$, the noise-contaminated input is given by
\begin{equation}\label{eq:noisyinput}
 \Cp(t_i)=u(t_i)(1+(\mathrm{CV}_\mathrm{R} + \mathrm{CV}_\mathrm{S})\eta_i),
\end{equation}
where $\eta_i$ is selected from a standard normal distribution (G$(0,1)$), and in the simulations we use C$V_\mathrm{S}=0.05$, see Figure \ref{fig:inpf}.

\subsection{Experimental Results for Noisy Data}\label{subsec:noisy}
Two hundred random noise realizations are generated for each input-TTAC pair, and for each noise level ($Sc=1$, $2$).  The distribution volume is calculated for each experimental pair using Logan-GA, MA1, KA and Algorithm~\ref{alg:Biascor_Guo}. In each case two scan durations are  considered, $70 $ and $90$  minutes respectively, and $t^*=35$ minutes. Unlike the noise-free case, the numerical quadrature for $\int_0^t\Cp(\tau){\rm d}\tau$  uses only the samples at scan points $\Cp(t_i)$. 

We present histograms for the  percentage relative error of the bias $100(\mathrm{DV}_{\mathrm{est}}-\DVT)/\DVT$ in order to provide a comprehensive contrast of the methods.  Figure~\ref{fig:DV-histall} shows the histograms for all eleven ROIs, with the range of the error for each method indicated in the legend. The figures (a)-(b) are for scan windows of $90$ minutes, for noise scale factors $Sc=1$ and $Sc=2$ while  (c)-(d) are for  scan windows of $70$ minutes.  Figure~\ref{fig:DV-hist3} provides equivalent information for a representative cortical region ROI~\textbf{3}. It is clear that the distributions of the relative errors for KA and MA1 are far from normal; KA has a significant positive tail while  Logan-GA has strong negative bias.  MA1 has   unacceptably long tails except for the case of low noise with long scan duration, i.e. $Sc=1$ with $90$ minutes scan duration. On the other hand, the histogram for Algorithm~\ref{alg:Biascor_Guo} is close to a Gaussian random distribution; the mean is near zero and the distribution is approximately symmetric. Moreover,  Algorithm~\ref{alg:Biascor_Guo}  performs well, and is only outperformed marginally by MA1  for the lower noise and longer time window case. On the other hand, there are some situations, particularly for MA1,  in which the relative error is less than $-100\%$; in other words, the calculated DVs are negative. Such \textit{unsuccessful} results occur only for the higher noise level ($Sc=2$). While there was only one such occurrence for the Logan-GA ($70$ min. with ROI~\textbf{9}) , there were $40$ such occurrences for MA1, $33$ for the shorter time interval of  $70$ minutes (ROIs~\textbf{1}, \textbf{3}, \textbf{4}, \textbf{5}, \textbf{6}, \textbf{8} and \textbf{9}) and $7$ for the longer interval of $90$ minutes, (ROIs~\textbf{1} and  \textbf{6}).  The reason for the negative DV for MA1 is discussed in Section~\ref{subsec:negDV}. From the results for the higher noise $Sc=2$ we conclude that  Algorithm~\ref{alg:Biascor_Guo} using the shorter $70$ minutes scan duration outperforms the other algorithms, even in comparison to their results for the longer scan duration.  

\begin{figure}[htbp]
\centerline{\subfigure[]{\includegraphics[scale=.45]{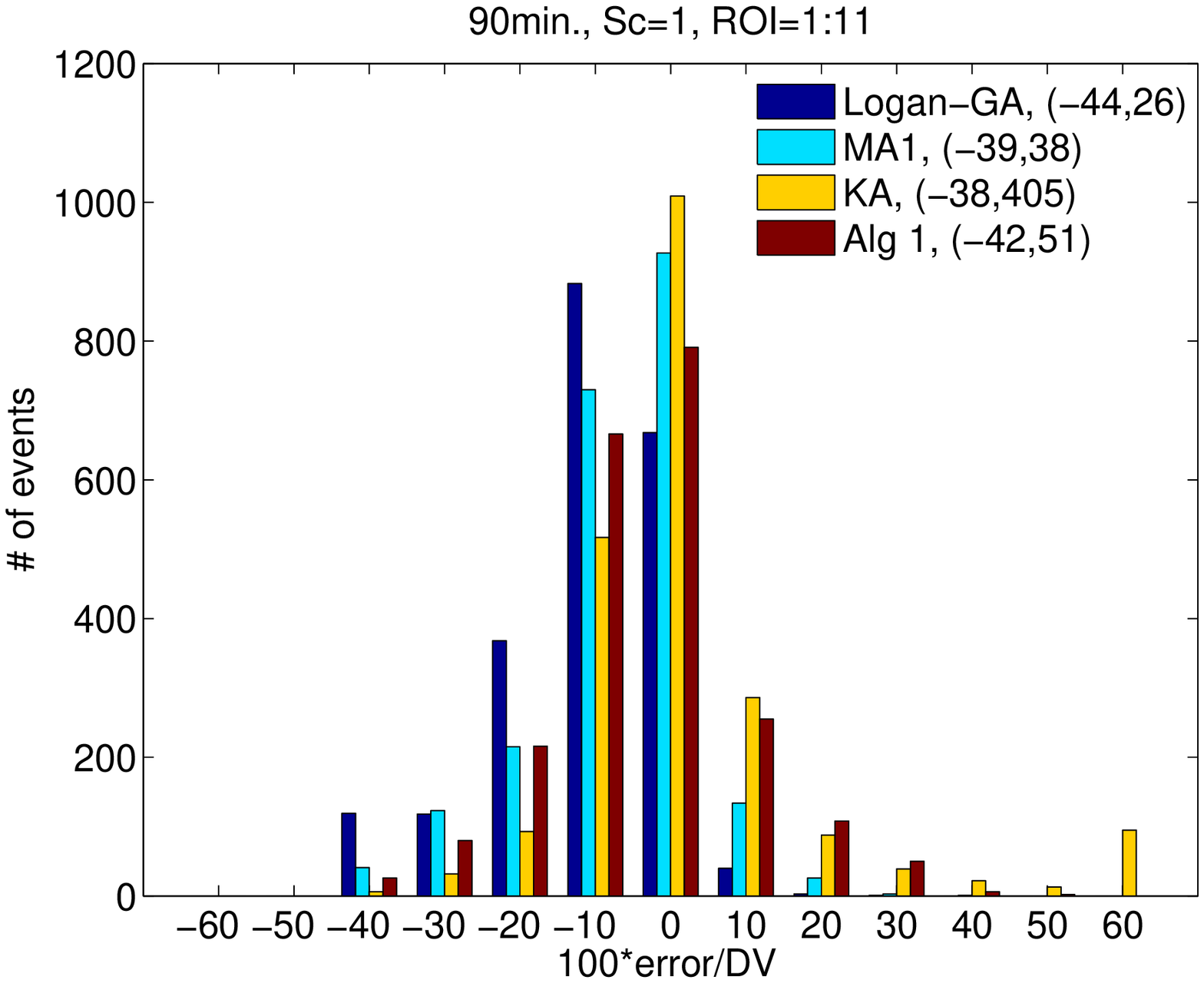}}\subfigure[]{
\includegraphics[scale=.45]{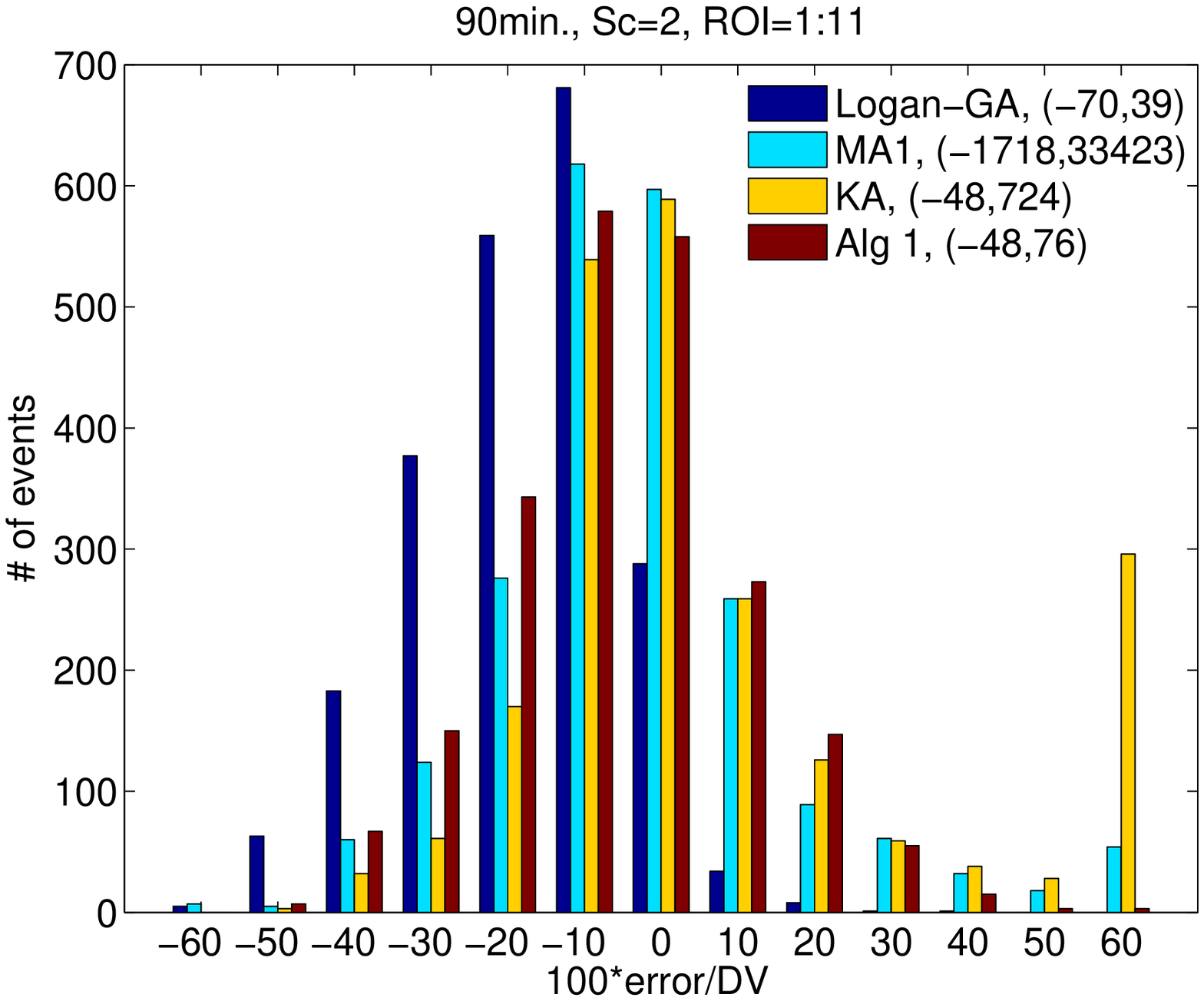}}}
\centerline{\subfigure[]{\includegraphics[scale=.45]{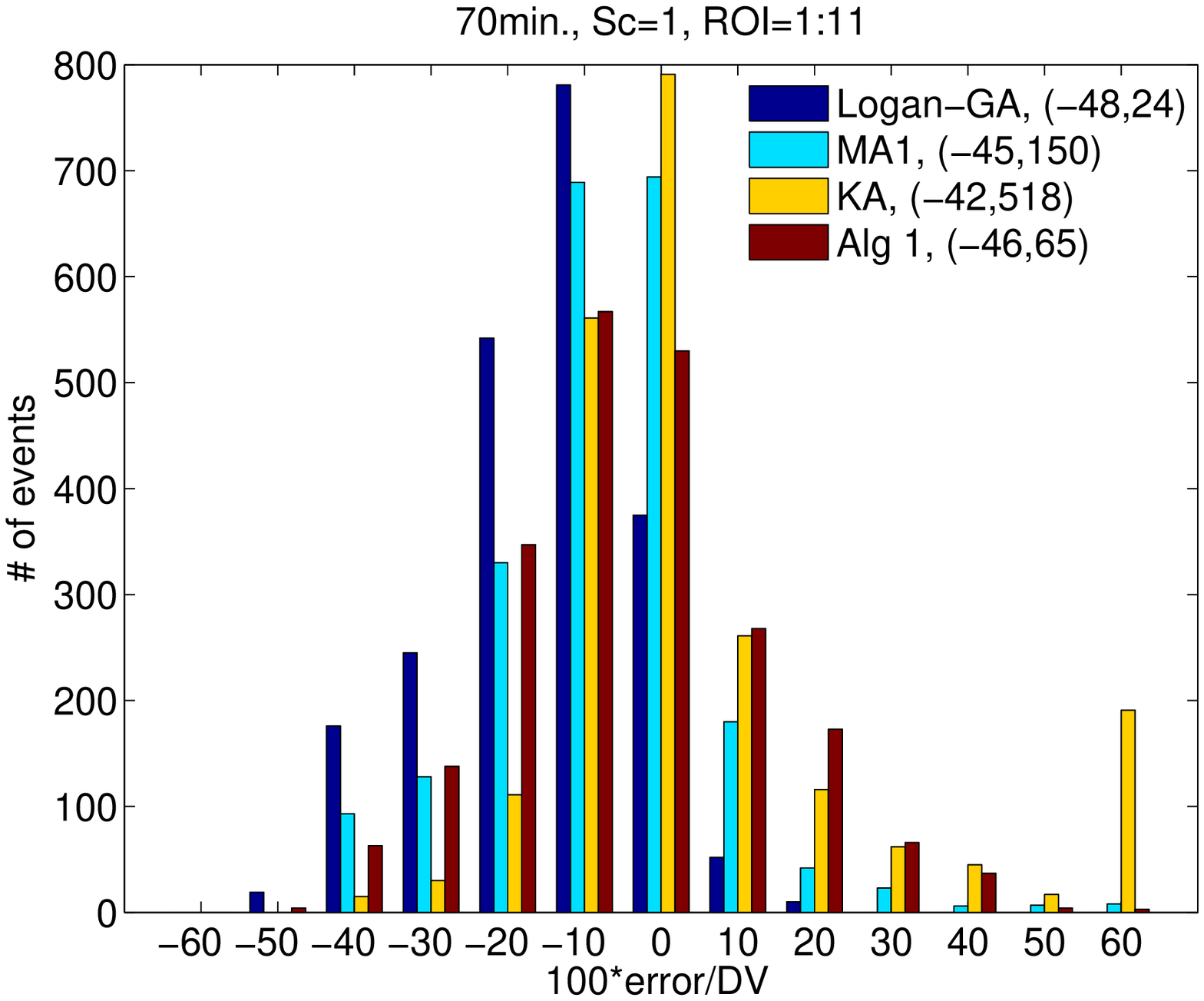}}\subfigure[]{
\includegraphics[scale=.45]{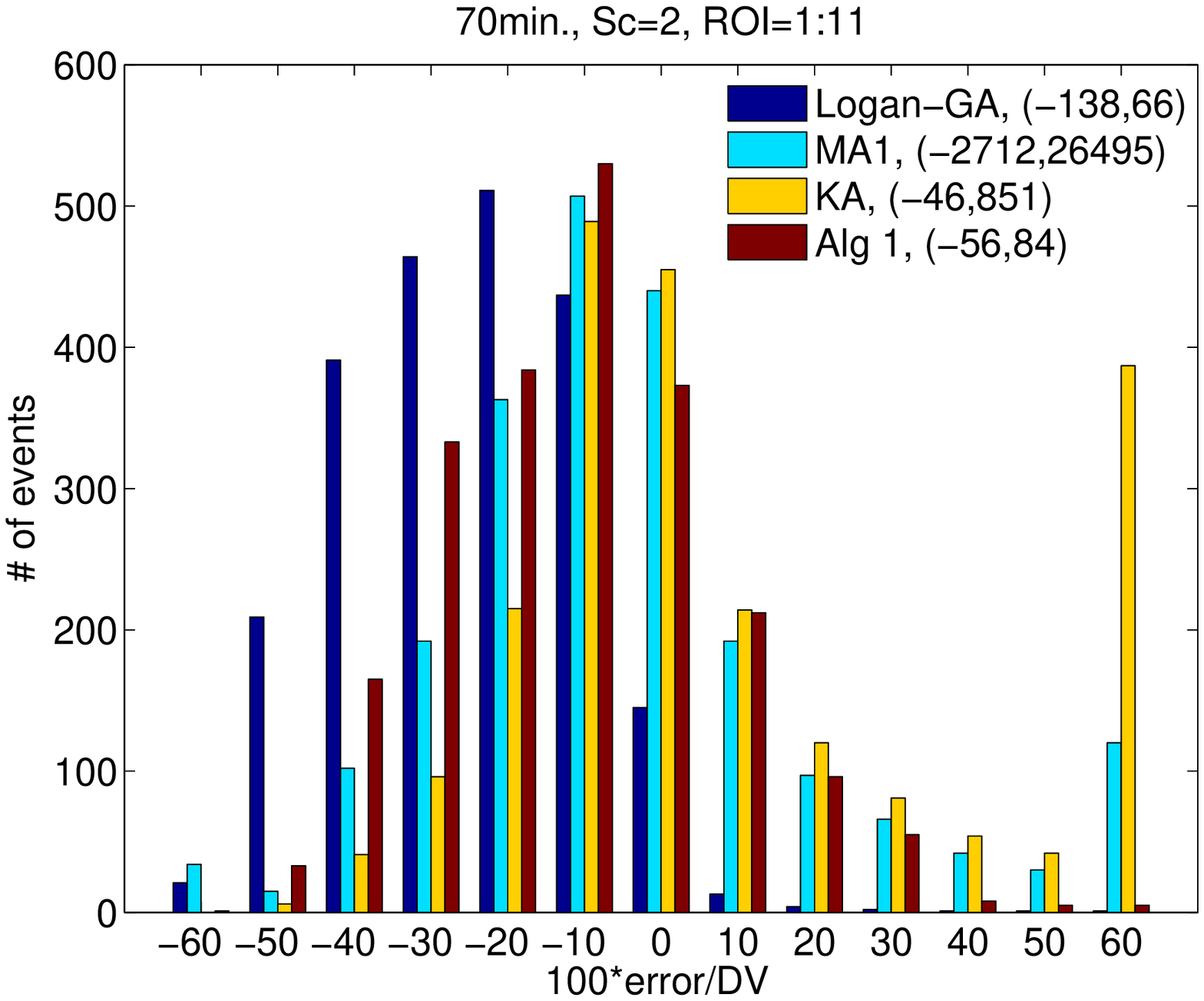}}}
\caption{Histograms for normalized error (in percentage),$100(\mathrm{DV}_{\mathrm{est}}-\DVT)/\DVT$, of the results for all eleven ROIs and four methods. The error ranges are presented in the legends.}
\label{fig:DV-histall}
\end{figure}

\begin{figure}[htbp]
\centerline{\subfigure[]{\includegraphics[scale=.45]{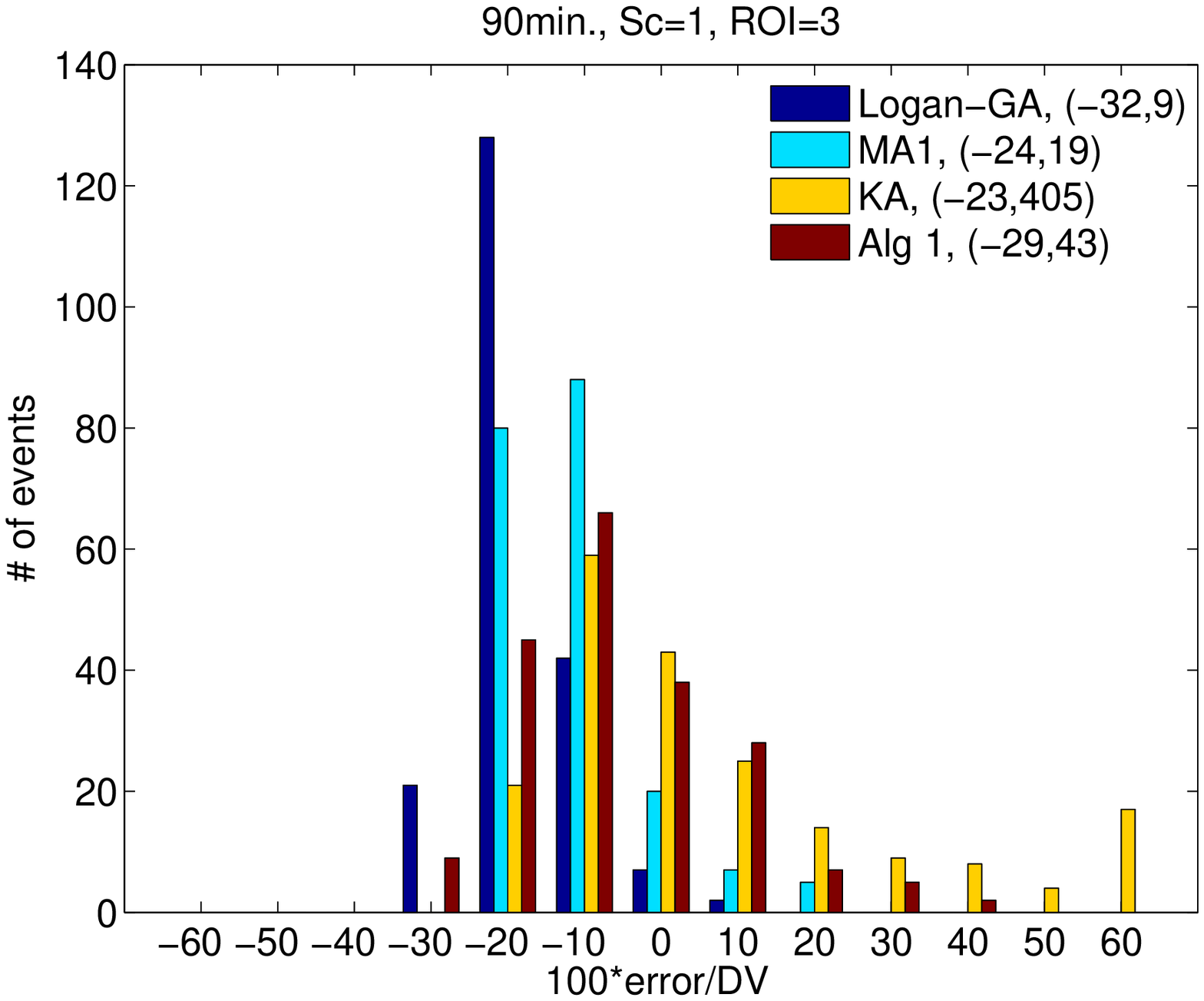}}
\subfigure[]{\includegraphics[scale=.45]{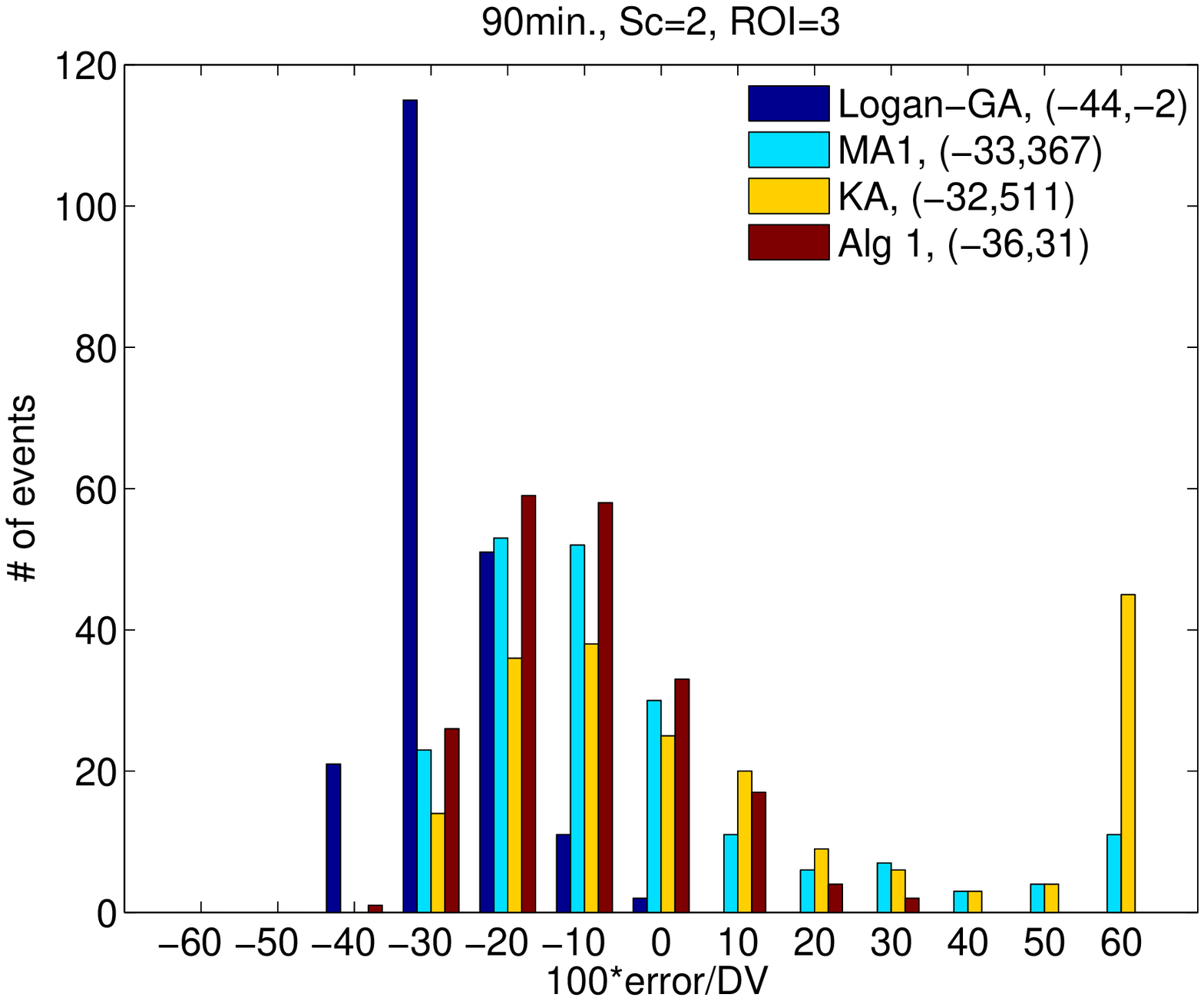}}}
\centerline{\subfigure[]{\includegraphics[scale=.45]{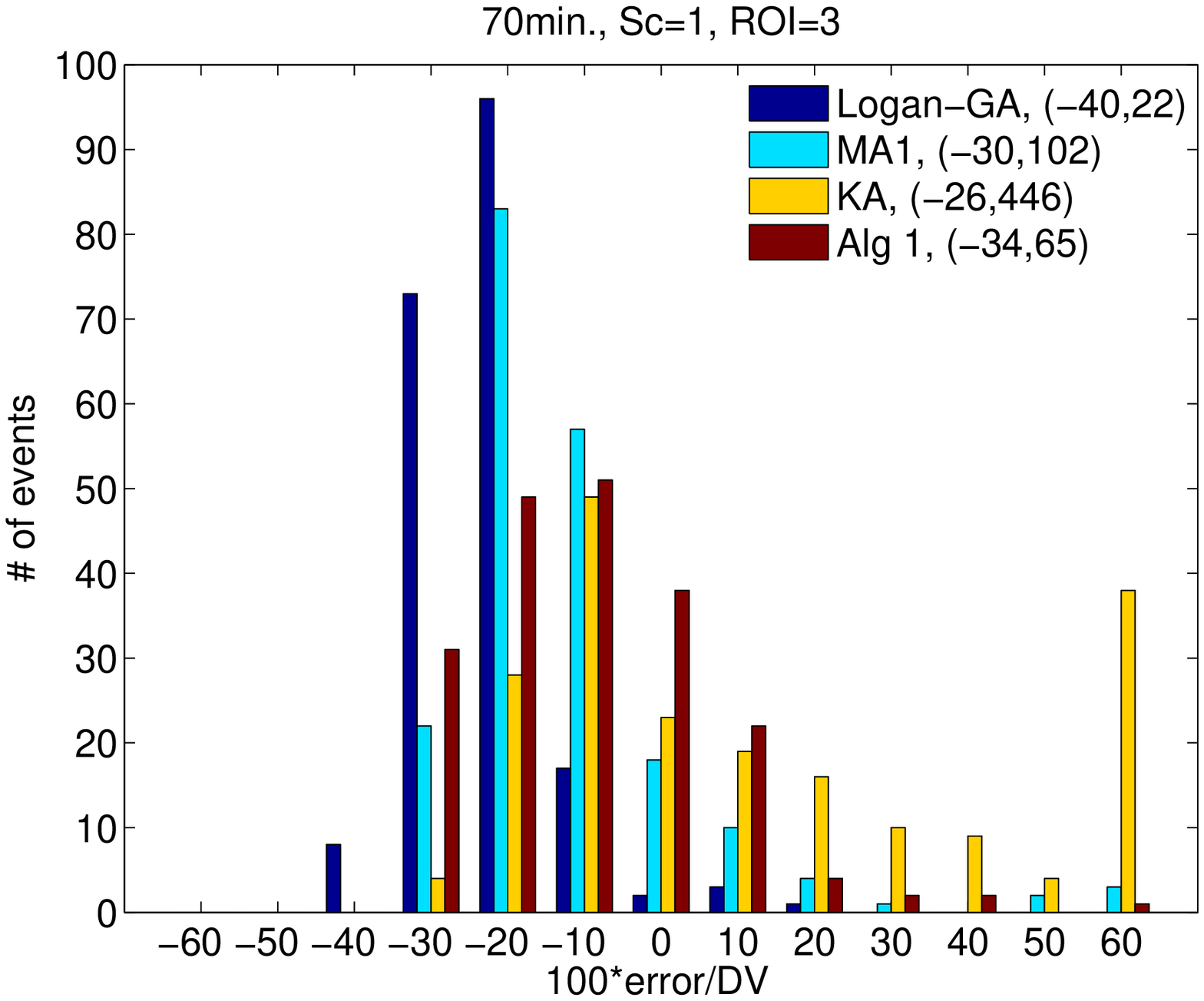}}
\subfigure[]{\includegraphics[scale=.45]{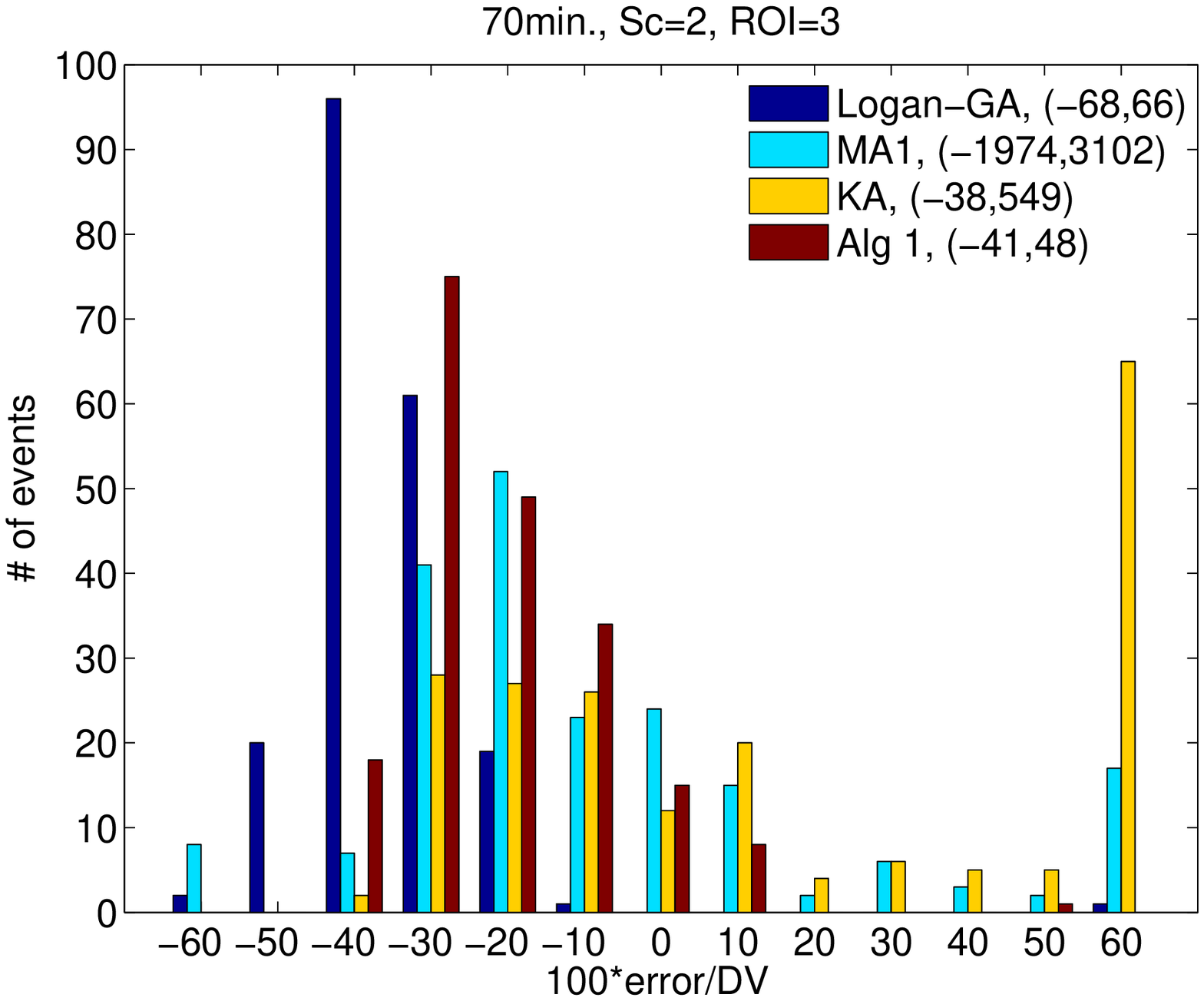}}}
\caption{Histograms for normalized error (in percentage),$100(\mathrm{DV}_{\mathrm{est}}-\DVT)/\DVT$, of the results for  ROI~\textbf{3} and four methods. The error ranges are presented in the legends.}
\label{fig:DV-hist3}
\end{figure}

Obviously Algorithm~\ref{alg:Biascor_Guo} is more expensive computationally than Logan-GA and MA1.  In the simulations, the average CPU time, in seconds, per TTAC  was  $0.00083$, $0.00057$, $12.2$ and $0.0036$, for Logan-GA, MA1,  KA and Algorithm \ref{alg:Biascor_Guo},  respectively. The high cost of the KA results from the requirement to use a nonlinear algorithm. Because the KA requires a good initial estimate for the parameters the cost is variable for each TTAC; it is dependent on whether the supplied initial value is a good initial estimate. Indeed the KA results take from $8$ to $25$ seconds, while the costs using the other methods are virtually TTAC independent. 

%

\section{Discussion}\label{sec:diss}
\subsection{Equilibrium Behavior and Dependence on the Size of $k_4$}\label{subsec:equilibrium}
The graphical analysis methods of Logan-type rely on the assumption that the ratio $\CS(t)$ to $\CT(t)$ is approximately constant within a  chosen window $[t^*, T]$. This ratio is plotted against time for  the simulated data for ROIs \textbf{1}  to \textbf{11} in Figure~\ref{fig:CS2CT90}. It is clear that the ratios for ROIs~\textbf{1}, \textbf{3} and \textbf{6} have not reached equilibrium even by $90$ minutes. These are the three data sets with the largest bias reported in Section~\ref{sec:overest} and with smallest $k_4$ (resp. $k_6$). It is certain that  equilibrium is eventually reached. These curves first increase to a peak  at about $120$ minutes for ROIs~\textbf{1} and \textbf{3} and at about $180$ minutes for ROI~\textbf{6} and then decrease before reaching approximately constant values (Figure~\ref{fig:CS2CT720}). On the other hand, increasing the scan duration to more than two hours is not practical.  Moreover, as illustrated in Figure~\ref{fig:GA-plot}, using the linearity of   $\int_0^t \CT(\tau){\it d\tau}/\CT(t)$ versus  $\int_0^t\CT(\tau){\it d\tau}/\Cp(t)$ to verify whether equilibrium has been reached may be misleading. For example, it would appear that all eleven  data sets have achieved equilibrium after roughly $35$ minutes. The arrow in Figure~\ref{fig:GA-plot} points to the marker corresponding to the data calculated at the middle point of the frame from $35$ to $40$ minutes.

\begin{figure}[tbp]
\centerline{
\subfigure[]{\label{fig:CS2CT90}        \includegraphics[height=2.5in,width=2.5in]{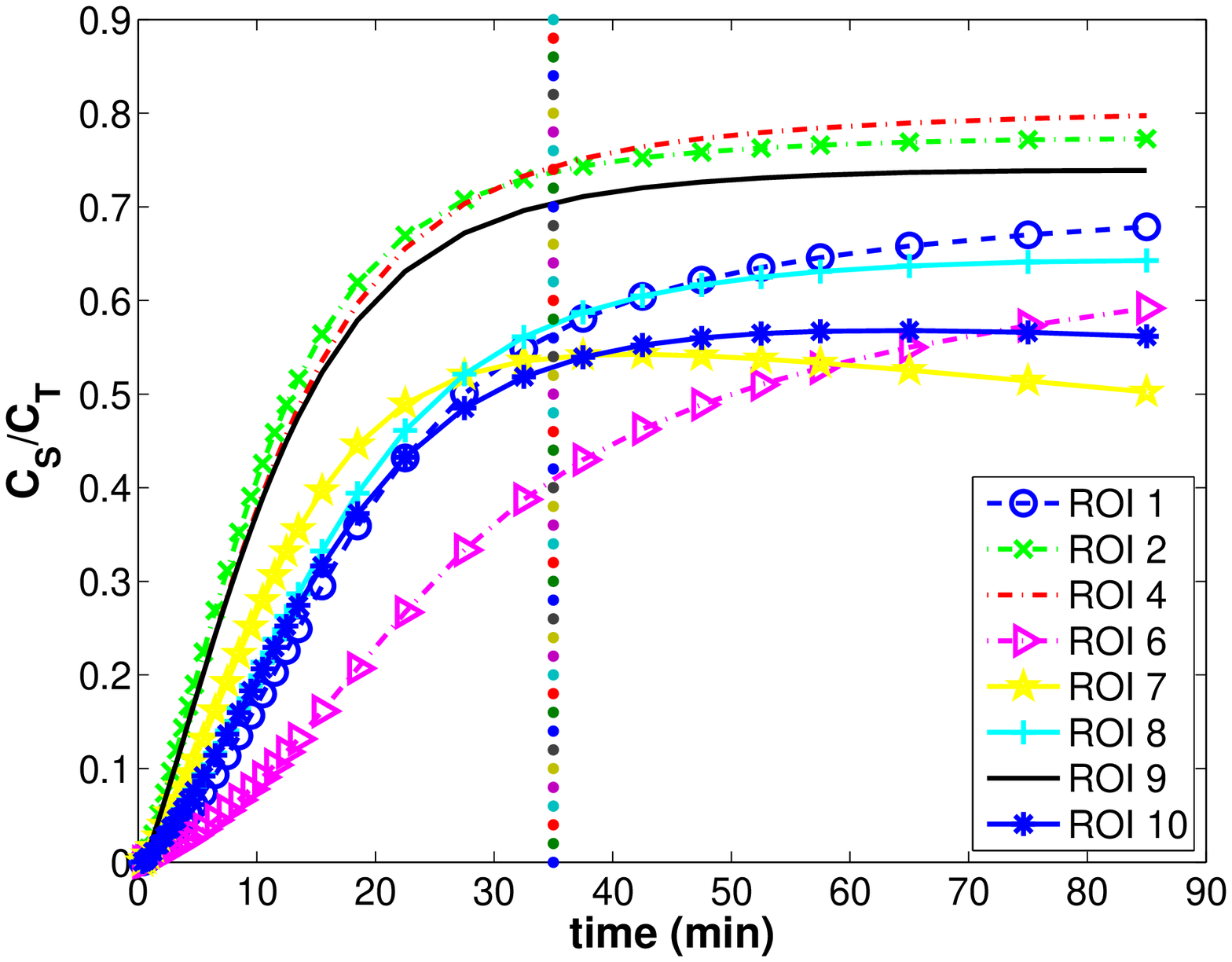}}
     \hspace{.3in}
\subfigure[]{\label{fig:CS2CT720}          \includegraphics[height=2.5in,width=2.5in]{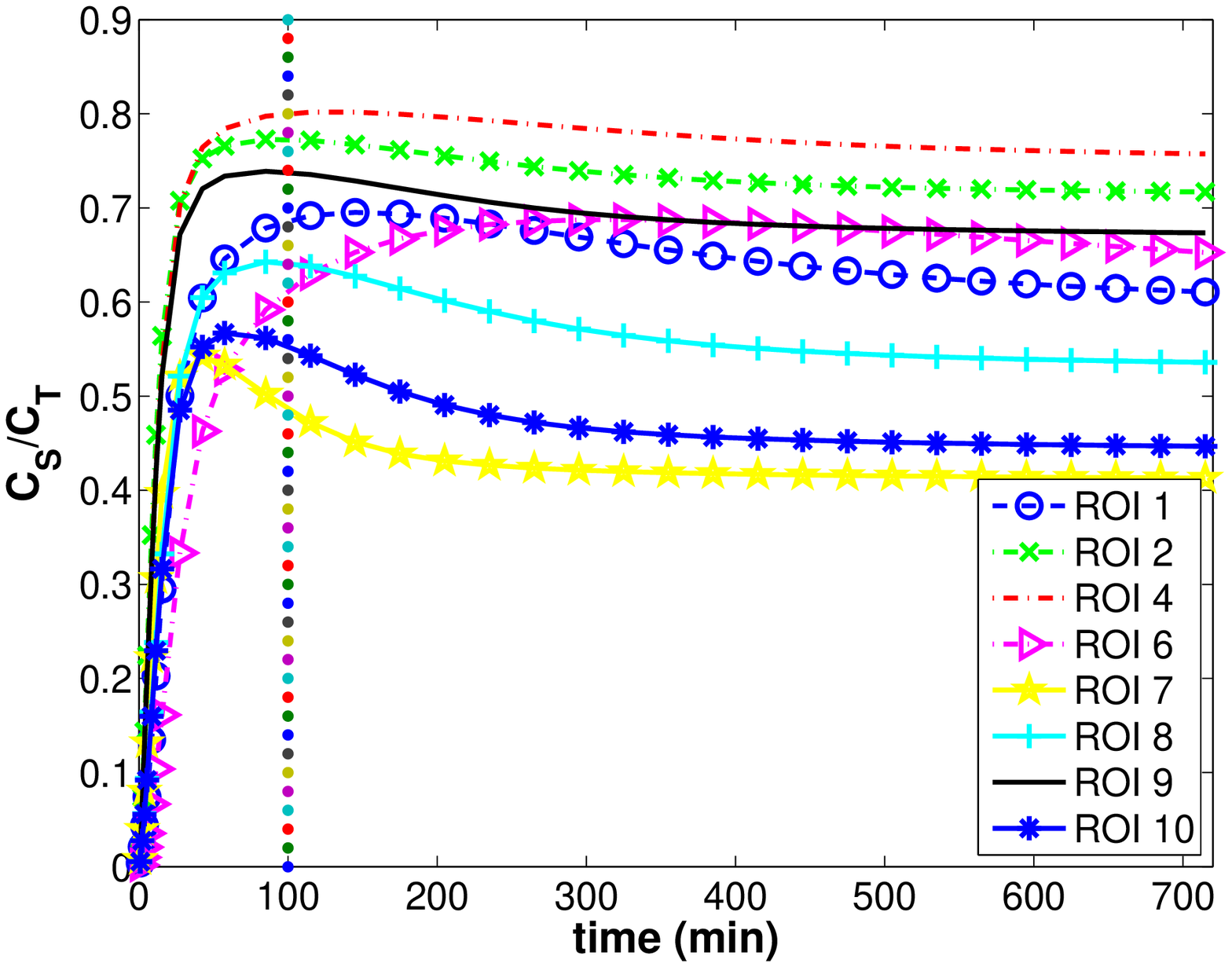}}
}
\caption{$\CS(t)/\CT(t)$ against time for all test ROIs except ROIs~\textbf{3}, \textbf{5} and \textbf{11} for the first $90$ minutes (a) and $720$ minutes (b). Dotted vertical lines are plotted at time $t^*=35$ minutes  (a) and $t^*=100$ minutes (b). The curves for ROIs~\textbf{3}, \textbf{5} and \textbf{11} are similar to those for ROIs~\textbf{1}, \textbf{4} and \textbf{10} resp..} 
\label{fig:CS2CT}
\end{figure}
 
\begin{figure}[tbp]
\centerline{
\includegraphics[scale=.45]{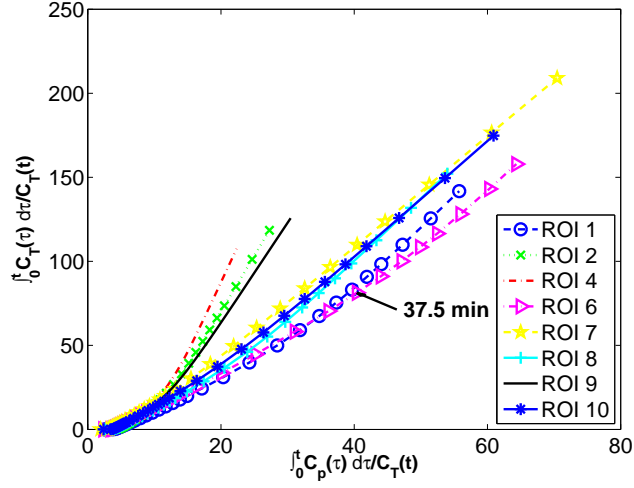}}
\caption{$\int_0^t \CT(\tau){\it d\tau}/\CT(t)$ (y-axis) against $\int_0^t\Cp(\tau){\it d\tau}/\CT(t)$ (x-axis) for  all test ROIs except ROIs~\textbf{3}, \textbf{5} and \textbf{11}  for the first $90$ minutes. The last eight points correspond to the time interval $35$ to $90$ minutes. The curves for ROIs~\textbf{3}, \textbf{5} and \textbf{11} are similar to those for ROIs~\textbf{1}, \textbf{4} and \textbf{10} resp.. The arrow points to the first frame falling in this interval for ROI~\textbf{6}. 
}
\label{fig:GA-plot}
\end{figure}

We illustrate the relation between the bias in the estimate of DV calculated by Logan-GA and $k_4$ in  Figure \ref{fig:Bias_k4}.  As discussed in Section~\ref{subsec:errana}, a small value of $k_4$ may cause a large variation in  $\bar{\bfs}(t)$. This graph verifies that the magnitude of the bias decreases as $k_4$ increases, further verifying that large bias in DV may arise purely due to modeling assumptions in the absence of noise in the data.

\begin{figure}[tbp]
\centerline{\includegraphics[scale=.45]{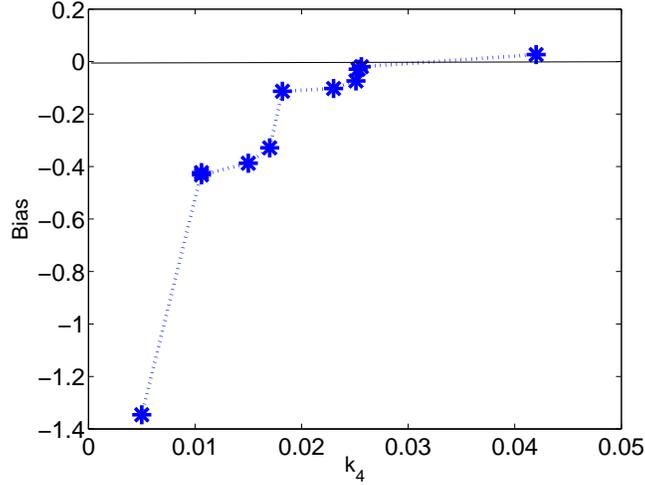}}
\caption{The bias in the Logan-GA estimation of the DV against the value of $k_4$ for the eleven ROIs, assuming noise-free data, a scan duration of $90$ minutes and $t^*=35$ minutes. 
The specific data pairs ($k_4$, bias) are, for ROIs~\textbf{1} to {11}, respectively, 
$(0.0106,   -0.4231)$, $(0.0230,   -0.1024)$, $(0.0106,   -0.4318)$, $(0.0170,   -0.3286)$, 
$(0.0150,   -0.3874)$, $(0.0050,   -1.3459)$, $(0.0420,    0.0267)$, $(0.0182,   -0.1130)$, 
$( 0.0251,   -0.0736)$, $(0.0256,   -0.0195)$, and $(0.0253,   -0.0288)$.}
\label{fig:Bias_k4}
\end{figure}

\subsection{The effects of quadrature error}\label{subsec:quad_effect}
Both Logan-GA and MA1, (\ref{eq:Logan}) and (\ref{eq:Multi-lin1}) resp., require the calculation of integrals $\int_0^t \CT(\tau){\rm d}\tau$ and $\int_0^t\Cp(\tau){\rm d}\tau$. Assume the noise-free measurements $\CT(t_i)$ are derived from the integral over the $i$th frame duration. Thus we can easily recover its integral without introducing error while  quadrature error for calculation of $\int_0^t\Cp(\tau){\rm d}\tau$  due to using a limited number of  plasma samples is unavoidable. The accuracy of the numerical  quadrature   impacts the accuracy of the parameter estimates. Note that we classify the noise effects as another source of bias in DV. 

We recalculate the DV for the experiments reported in Section~\ref{sec:overest}, but now using numerical quadrature for calculation of $\int_0^t\Cp(\tau){\rm d}\tau$ with data sampled one time point per time frame.  The bias for each ROI of the estimated DV using $90$ minutes scan data with $t^*=35$  minutes   is   $-11.83 \% $, $-2.99 \% $, $-11.91 \% $, $-4.88 \% $, $-5.64 \% $, $-30.49 \% $, $-1.22 \% $, $-4.61 \% $, $-2.81 \% $, $-2.40 \% $ and $-2.63 \% $ when calculated using Logan-GA, and  $-12.02 \% $, $-3.10 \% $, $-12.10 \% $, $-5.01 \% $, $-5.77 \% $, $-30.42 \% $, $-1.28 \% $, $-4.87 \% $, $-2.93 \% $, $-2.61 \% $ and $-2.86 \% $ calculated using MA1. It is interesting to note that the DV calculated for ROI~\textbf{7} is no longer an over-estimate. This does not contradict the result of Theorem~\ref{thm:Logan_bias}, which predicts that the DV for ROI~\textbf{7} will be over-estimated due to model error, provided that the other aspects of the calculation are accurate. Now using a less accurate quadrature the negative bias due to quadrature error canceled the positive bias due to the model error. Indeed, for all eleven test cases the impact of the less accurate quadrature is to shift the bias down, i.e. it is more negative as compared to the equivalent more accurate calculations shown in  Table~\ref{tab:overest}. 


\subsection{Bias and classification between AD and NC subjects}
In the eleven simulated ROIs, large under-estimation of the DV calculated by Logan-GA and MA1 is observed for ROIs~\textbf{1} (NC Cort), \textbf{3} (NC PCG) and \textbf{6} (AD Cere). A lower value of the  DV in the  cortical regions of NCs and in the cerebellum for AD subjects  will result in under-estimation of the DVR for NCs and  over-estimation of the   DVR for AD subjects when the cerebellum is used as the reference region for the DVR calculation. Thus, the difference between AD and NC can be artificially enhanced, and viewed as a positive outcome associated with the bias of Logan-GA and MA1. This conclusion, however, can not be generalized. It is unknown whether it is always the case that AD/NC have small/large $k_6$ in cerebellar regions and relatively large/small $k_4$ in cortical regions. Confirmation of these assertions would suggest, based on the discussion in Sections~\ref{subsec:errana} and \ref{subsec:equilibrium},  that the DVR is over-estimated for AD subjects and under-estimated for healthy subjects (also see Figure~\ref{fig:Bias_k4}). In addition, more subtle differences, such as the ones between mild cognitive impairment (MCI) and NC, or among NC with differential genetic risk for AD, may make the effects of bias much less predictable. 
Consequently, we evaluate the quantification methods based on their bias because the goal of these methods is to estimate the DV as accurately as possible.


\subsection{When does MA1 fail?} \label{subsec:negDV}
As noted in Section~\ref{subsec:noisy}, MA1 generates some results with negative DVs. Such results are reported as \textit{unsuccessful}  in Ichise's original paper \cite{ichise2002str}. Careful study of these results shows that the negative DVs arise when $-1/b$ has the wrong sign.  For most radioligand binding studies $1/b$ is a small positive number because $b>(k_3+k_4)/(k_2k_4)$, which is usually larger than $10$, see Remark~(\ref{remark:alpha1}) of Algorithm~\ref{alg:Biascor_Guo}. Thus a small error in the estimate of  $-1/b$  due to large noise in the data may change its sign. This  in turn impacts the sign of the estimate of the DV.

\section{Conclusions} \label{sec:conc}
In this article, we quantified the model error in estimating distribution volume using graphical analysis methods. We described the conditions under which the DV is either  over- or under-estimated, and  quantified the bias caused by model error. We validated our findings through simulations with noise-free data. To reduce the impact of model error, we added a simple nonlinear term to the fundamental linear model MA0, and presented a new algorithm for its solution. Simulations with noisy data demonstrate that the new algorithm is cost-effective and robust even for shorter scan durations.  For PIB-PET studies, the new method using shorter scan data ($70 $ minutes) outperforms, or is at least as good as, Logan-GA, MA1 and KA methods using longer scan data ($90$ minutes). 
The proposed approach can be easily extended for DVR estimation. This is a focus of our future work. 


\section{Acknowledgment}
This work was supported by grants from the state of Arizona (to Drs. Guo, Reiman, Chen and Renaut), the NIH (R01 AG031581, R01 MH057899 and P30 AG19610 to Dr. Reiman) and the NSF (DMS 0652833 to Dr. Renaut and DMS 0513214 to Drs. Renaut and Guo). 
The authors thank researchers from the University of Pittsburgh for their published findings, including information about PiB input function and rate constants.

\section{Appendix A: fundamental theory for Corollary~\ref{thm:Logan_bias}} \label{sec:app}
Here we present the theoretical result from which Theorem~\ref{thm:Logan_bias} is obtained. We use the notation that $\bfa=(a_1$ , $a_2$, $\cdots$, $a_n)^T$ and $\bfb=(b_1$, $b_2$, $\cdots$, $b_n)^T$, are vectors with entries $a_i$ and $b_i$, resp. The notation $\bfa/\bfb$ and $\bfa\circ \bfb$ denotes component wise division and multiplication, namely entries $a_i/b_i$ and $a_ib_i$,  $\|\bfa\|_1$ is  $\sum_{i=1}^n|a_i|$ and $\|\bfa\|_2 =\sqrt{a_i^2+a_2^2+\cdots+a_n^2}$  is the Euclidean norm. We call  $\bfa$  decreasing (\textbf{increasing})  if  $a_1 \geq a_2 \geq \cdots \geq a_n$ ($\mathbf{a_1 \leq a_2 \leq \cdots \leq a_n}$), and non-constant decreasing (\textbf{non-constant increasing}) if  it is decreasing (\textbf{increasing}) and  at least one of the $\geq$ ($\mathbf{\leq}$) signs is strict, $>$ ($\mathbf{<}$). If all of the $\geq$ ($\mathbf{\leq}$) signs are strict, we 
call  $\bfa$  strictly decreasing (\textbf{strictly increasing}). A vector $\bfa$ is constant if $a_i=a$ for some constant $a$ and for all $i$. 
\begin{lemma} {\rm ( Chebyshev's sum inequality \cite{Gradmathtable} )} Given real numbers \\ $a_1 \geq a_2 \geq \cdots \geq a_n$ and $b_1 \geq b_2 \geq \cdots \geq b_n,$ then
\begin{equation}\label{cheb_ge}
 \frac{1}{ n} \displaystyle\sum_{k=1}^n a_kb_k \geq \left(\frac{1}{ n}\displaystyle\sum_{k=1}^n a_k\right)\left(\frac{1}{ n}\displaystyle\sum_{k=1}^n b_k\right).
\end{equation}
Similarly, if $a_1 \geq a_2 \geq \cdots \geq a_n$ and $b_1 \leq b_2 \leq \cdots \leq b_n,$    then
\begin{equation}\label{cheb_le}
\frac{1}{ n} \displaystyle\sum_{k=1}^n a_kb_k \leq \left(\frac{1}{ n}\displaystyle\sum_{k=1}^n a_k\right)\left(\frac{1}{ n}\displaystyle\sum_{k=1}^n b_k\right).
\end{equation}
\end{lemma}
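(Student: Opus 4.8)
The plan is to prove both inequalities at once using the classical symmetric double-sum identity, which lets the monotonicity hypotheses do all the work. The central object is
\begin{equation}
S = \sum_{i=1}^n \sum_{j=1}^n (a_i - a_j)(b_i - b_j),
\end{equation}
whose value can be written in closed form and whose sign is dictated entirely by whether the two sequences are ordered the same way or oppositely.

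First I would expand the product and sum over both indices. Using $\sum_{i,j} a_i b_j = \left(\sum_i a_i\right)\left(\sum_j b_j\right)$ for the cross terms and $\sum_{i,j} a_i b_i = n\sum_i a_i b_i$ for the diagonal-type terms, the bookkeeping collapses to the identity
\begin{equation}
S = 2n \sum_{k=1}^n a_k b_k - 2\left(\sum_{k=1}^n a_k\right)\left(\sum_{k=1}^n b_k\right).
\end{equation}
Dividing through by $2n^2$ shows that (\ref{cheb_ge}) is precisely the statement $S \geq 0$, while (\ref{cheb_le}) is precisely the statement $S \leq 0$, so it remains only to determine the sign of $S$.

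Next I would read off the sign from the hypotheses. If $a_1 \geq \cdots \geq a_n$ and $b_1 \geq \cdots \geq b_n$, then for every pair $(i,j)$ the factors $a_i - a_j$ and $b_i - b_j$ are either both nonnegative (when $i \le j$) or both nonpositive (when $i \ge j$), and they vanish when $i=j$; hence every summand is nonnegative and $S \geq 0$, giving (\ref{cheb_ge}). If instead $a$ is decreasing while $b$ is increasing, the two factors have opposite signs for every pair, so every summand is nonpositive and $S \leq 0$, giving (\ref{cheb_le}).

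I do not anticipate any real obstacle here: the argument is elementary and self-contained, and the only care required is in verifying the expansion of $S$ and in making sure the sign discussion covers both orderings of $i$ versus $j$ as well as the trivial diagonal case. An alternative would be to deduce the result from the rearrangement inequality or from an Abel (summation-by-parts) identity, but the double-sum approach requires no auxiliary results, so that is the route I would take.
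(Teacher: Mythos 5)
Your argument is correct. The double-sum identity
\begin{equation*}
\sum_{i=1}^n\sum_{j=1}^n (a_i-a_j)(b_i-b_j) \;=\; 2n\sum_{k=1}^n a_kb_k \;-\; 2\Bigl(\sum_{k=1}^n a_k\Bigr)\Bigl(\sum_{k=1}^n b_k\Bigr)
\end{equation*}
expands exactly as you describe, and the sign analysis (both factors $a_i-a_j$ and $b_i-b_j$ agree in sign for similarly ordered sequences, disagree for oppositely ordered ones, and vanish on the diagonal) settles both inequalities at once. Note, however, that the paper does not actually prove this lemma: it is stated as a known result with a citation to Gradshteyn--Ryzhik, and the authors' own contribution begins with Lemma~2, which they prove by induction. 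So there is no in-paper proof to compare against; your double-sum derivation is the standard self-contained proof of Chebyshev's sum inequality and would serve perfectly well if one wanted the paper to be self-contained. It also has the side benefit of making the equality case transparent (every summand must vanish, which forces one of the two sequences to be constant whenever the other is strictly monotone), a fact the paper asserts in the remark immediately following the lemma without justification.
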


In the above Chebyshev's sum inequalities the numbers are not required to be positive and  the equality is true if and only if one of the two vectors, $\bfa$ or $\bfb$, is a constant vector.  If $\bfa$ and $\bfb$ are positive vectors, the Chebyshev's sum inequalities can be expressed as $\bfa^T\bfb\ge \frac{1}{n}\|\bfa\|_1\|\bfb\|_1$ and  $\bfa^T\bfb\le \frac{1}{n}\|\bfa\|_1\|\bfb\|_1$.

\begin{lemma}\label{lemma:baseineq}
If $\bfp$, $\bfq$ and $\bfs$ are positive real vectors, of which $\bfp$ is a increasing vector and  $\bfq$ is a decreasing vector, then
\begin{enumerate}
\item\label{lemma2_1} $\|\bfq\|_2^2\bfp^T\bfs-\bfp^T\bfq\bfq^T\bfs\ge 0$ if $\bfs/\bfq$ is a non-constant increasing vector. The inequality is strict if  $\bfp$ is  {\it strictly} increasing.
\item \label{lemma2_2}$\|\bfq\|_2^2\bfp^T\bfs-\bfp^T\bfq\bfq^T\bfs\le 0$ if $\bfs/ \bfq$ is a non-constant decreasing vector. The inequality is strict if  $\bfp$ is  {\it strictly} increasing.
\item \label{lemma2_3}$\|\bfq\|_2^2\bfp^T\bfs-\bfp^T\bfq\bfq^T\bfs=0$ if $\bfs/ \bfq$ is a constant vector, 
\item\label{lemma2_4} $\|\bfp\|_2^2\bfq^T\bfs-\bfp^T\bfq\bfp^T\bfs\ge 0$ if $\bfs/ \bfp$ is a non-constant decreasing vector. The inequality is strict if  $\bfp$ is  {\it strictly} increasing.
\item \label{lemma2_5}$\|\bfp\|_2^2\bfq^T\bfs-\bfp^T\bfq\bfp^T\bfs\le 0$ if $\bfs/ \bfp$ is a non-constant increasing vector. The inequality is strict if  $\bfp$ is  {\it strictly} increasing.
\item \label{lemma2_6}$\|\bfp\|^2\bfq^T\bfs-\bfp^T\bfq\bfp^T\bfs=0$ if $\bfs/ \bfp$ is a constant vector.
\end{enumerate}
\end{lemma}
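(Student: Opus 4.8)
The plan is to reduce all six parts to a single algebraic identity and then read off the sign of each summand. By direct expansion, for entries $w_i>0$ and arbitrary real entries $x_i,y_i$ ($i=1,\dots,n$) one has the weighted companion of the identity underlying Chebyshev's sum inequality,
\begin{equation}\label{eq:sosident}
\left(\sum_i w_i\right)\left(\sum_i w_i x_i y_i\right)-\left(\sum_i w_i x_i\right)\left(\sum_i w_i y_i\right)=\sum_{i<j} w_i w_j (x_i-x_j)(y_i-y_j).
\end{equation}
(Equivalently one may just invoke the weighted Chebyshev sum inequality, the standard generalization of the displayed Lemma; the identity route is preferred since it also makes the equality cases transparent.)

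For parts \ref{lemma2_1}--\ref{lemma2_3} I would apply (\ref{eq:sosident}) with $w_i=q_i^2$ (positive, since $\bfq$ is a positive vector), $x_i=p_i/q_i$ and $y_i=s_i/q_i$. The four sums on the left of (\ref{eq:sosident}) then become $\|\bfq\|_2^2$, $\bfp^T\bfs$, $\bfp^T\bfq$ and $\bfq^T\bfs$ respectively, so the left side is exactly $\|\bfq\|_2^2\bfp^T\bfs-\bfp^T\bfq\,\bfq^T\bfs$. Since $\bfp$ is positive and increasing while $\bfq$ is positive and decreasing, the componentwise quotient $\bfp/\bfq$ is increasing, i.e. $x_i-x_j\le 0$ whenever $i<j$, with strict inequality when $\bfp$ is strictly increasing. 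If $\bfs/\bfq$ is non-constant increasing then $y_i-y_j\le 0$ for $i<j$, so every summand $w_iw_j(x_i-x_j)(y_i-y_j)$ on the right of (\ref{eq:sosident}) is $\ge 0$, which gives part \ref{lemma2_1}; if $\bfs/\bfq$ is non-constant decreasing the signs of $y_i-y_j$ reverse and every summand is $\le 0$, giving part \ref{lemma2_2}; if $\bfs/\bfq$ is constant all summands vanish, giving part \ref{lemma2_3}. For the strict cases one notes that when $\bfp$ is strictly increasing $x_i-x_j\neq 0$ for every $i<j$, while non-constancy of $\bfs/\bfq$ supplies at least one pair $i_0<j_0$ with $y_{i_0}-y_{j_0}\neq 0$; that pair contributes a strictly signed term while all remaining terms have the same (weak) sign, so the total is strictly signed.

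Parts \ref{lemma2_4}--\ref{lemma2_6} are handled symmetrically by applying (\ref{eq:sosident}) with $w_i=p_i^2$, $x_i=q_i/p_i$ and $y_i=s_i/p_i$; now the left side equals $\|\bfp\|_2^2\bfq^T\bfs-\bfp^T\bfq\,\bfp^T\bfs$, and $\bfq/\bfp$ is decreasing (strictly so when $\bfp$ is strictly increasing), so $x_i-x_j\ge 0$ for $i<j$. The hypotheses on $\bfs/\bfp$ then control the sign of $y_i-y_j$ exactly as before, and the same term-by-term argument yields the three remaining claims. I expect no serious obstacle: the only points needing care are verifying that the quotients $\bfp/\bfq$ and $\bfq/\bfp$ are monotone — which uses positivity of all three vectors in an essential way — and the bookkeeping for the strict-inequality assertions.
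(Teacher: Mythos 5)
Your proof is correct, and it takes a genuinely different route from the paper's. The paper proves only part \ref{lemma2_1} explicitly, by induction on $n$: it verifies the $n=2$ case by brute-force expansion and then shows that the increment from $n=i$ to $n=i+1$ consists of the nonnegative cross terms $\sum_{k\le i}(q_kp_{i+1}-p_kq_{i+1})(q_kq_{i+1})\bigl(\tfrac{s_{i+1}}{q_{i+1}}-\tfrac{s_k}{q_k}\bigr)$, leaving the other five parts to ``follow similarly.'' Your weighted Chebyshev identity
\begin{equation*}
\Bigl(\sum_i w_i\Bigr)\Bigl(\sum_i w_i x_i y_i\Bigr)-\Bigl(\sum_i w_i x_i\Bigr)\Bigl(\sum_i w_i y_i\Bigr)=\sum_{i<j} w_i w_j (x_i-x_j)(y_i-y_j)
\end{equation*}
is in effect the closed form of what the induction assembles term by term (the paper's increment terms are exactly your $w_kw_{i+1}(x_{i+1}-x_k)(y_{i+1}-y_k)$ with $w_i=q_i^2$, $x_i=p_i/q_i$, $y_i=s_i/q_i$), but stating it once and reading off signs buys you several things: all six parts and both substitutions ($w_i=q_i^2$ versus $w_i=p_i^2$) are handled uniformly, the equality cases are immediate, and the strictness assertions are cleaner --- you isolate one pair $i_0<j_0$ with $y_{i_0}\neq y_{j_0}$ guaranteed by non-constancy and note its term is strictly signed while the rest share the weak sign, which is actually tighter reasoning than the paper's inductive strictness argument (where the non-constant step may fall inside the portion covered only by the weak induction hypothesis). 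Your identifications of the four sums with $\|\bfq\|_2^2$, $\bfp^T\bfs$, $\bfp^T\bfq$, $\bfq^T\bfs$ (resp.\ $\|\bfp\|_2^2$, $\bfq^T\bfs$, $\bfp^T\bfq$, $\bfp^T\bfs$) check out, as does the monotonicity of $\bfp/\bfq$ and $\bfq/\bfp$ from positivity; no gaps.
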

\begin{proof}
 We only prove the first case. The proof for the other items follows similarly.
 We use mathematical  induction. 
For the lowest dimension $n=2$, 
 \begin{eqnarray*}
&&\|\bfq\|_2^2\bfp^T\bfs-\bfp^T\bfq\bfq^T\bfs \\
&=&(q_1^2+q_2^2)(p_1s_1+p_2s_2)-(p_1q_1+p_2q_2)(q_1s_1+q_2s_2)\\  
&=&q_1^2p_2s_2+q_2^2p_1s_1 -p_1q_1q_2s_2-p_2q_2q_1s_1\\
&=&(q_1s_2-q_2s_1)(q_1p_2-q_2p_1)\\
 &=&(q_1p_2-q_2p_1)\big((q_1q_2)(\frac{s_2}{q_2}-\frac{s_1}{q_1})\big)\\
&\ge& 0.
\end{eqnarray*}
The last reduction follows from the monotonicity of   $\bfp$, $\bfq$, which implies  $q_1p_2-q_2p_1\ge 0$, and the non-constant increasing assumption of $\bfs/ \bfq$, which guarantees  $\frac{s_2}{q_2}-\frac{s_1}{q_1}>0$. When $\bfp$ is strictly increasing  $q_1p_2-q_2p_1>0$. Under this condition $\|\bfq\|_2^2\bfp^T\bfs-\bfp^T\bfq\bfq^T\bfs>0$ for $n=2$.  Assuming the inequality $\|\bfq\|_2^2\bfp^T\bfs-\bfp^T\bfq\bfq^T\bfs\ge 0$ is true for dimension $n=i$, i.e. $$\displaystyle\sum_{k=1}^iq_k^2\displaystyle\sum_{k=1}^ip_ks_k-\displaystyle\sum_{k=1}^ip_kq_k\displaystyle\sum_{k=1}^iq_ks_k\ge 0,$$
 then for $n=i+1$ 
\begin{eqnarray*}
&&\|\bfq\|_2^2\bfp^T\bfs-\bfp^T\bfq\bfq^T\bfs \\ &=&(\displaystyle\sum_{k=1}^iq_k^2+q_{i+1}^2)(\displaystyle\sum_{k=1}^ip_ks_k+p_{i+1}s_{i+1})-
(\displaystyle\sum_{k=1}^ip_kq_k+p_{i+1}q_{i+1} )(\displaystyle\sum_{k=1}^iq_ks_k+q_{i+1}s_{i+1})\\
&=&(\displaystyle\sum_{k=1}^iq_k^2\displaystyle\sum_{k=1}^ip_ks_k-\displaystyle\sum_{k=1}^ip_kq_k\displaystyle\sum_{k=1}^iq_ks_k )\\
&& +(p_{i+1}s_{i+1}\displaystyle\sum_{k=1}^iq_k^2-q_{i+1}s_{i+1}\displaystyle\sum_{k=1}^ip_kq_k )
+(q_{i+1}^2\displaystyle\sum_{k=1}^ip_ks_k -p_{i+1}q_{i+1} \displaystyle\sum_{k=1}^iq_ks_k)\\
&\ge & 0+ s_{i+1}\displaystyle\sum_{k=1}^iq_k(q_kp_{i+1} -p_kq_{i+1})+q_{i+1} \displaystyle\sum_{k=1}^is_k(q_{i+1}p_k-p_{i+1}q_k)\\
&=&\displaystyle\sum_{k=1}^i (q_kp_{i+1} -p_kq_{i+1})(q_ks_{i+1}-q_{i+1}s_k)\\
&=&\sum_{k=1}^i \big((q_kp_{i+1}-p_kq_{i+1})(q_kq_{i+1})\big)(\frac{s_{i+1}}{q_{i+1}}-\frac{s_k}{q_k}).\\
&\ge & 0.
\end{eqnarray*}
The last reduction is based on the monotonicity of  $\bfp, \bfq$ and $\bfs/ \bfq$. 
 When $\bfp$ is strictly increasing  $q_kp_{i+1} -p_kq_{i+1}>0$ for all $k\le i$ the inequality will be strict because   at least one of the terms $\frac{s_{i+1}}{q_{i+1}}-\frac{s_k}{q_k}, k=1,\cdots,i,$ is positive based on the monotonicity condition. The result thus follows by induction  for all integers $n\ge 2$.
\end{proof}

The following corollary now follows immediately by observing that $\bfs/\bfq$ increases  when $\bfs$ increases and $\bfs/\bfp$ decreases when $\bfs$ decreases.
\begin{cor}\label{cor:ineq_monos}
If $\bfp$, $\bfq$ and $\bfs$ are positive real vectors, of which $\bfp$ is a strictly increasing vector and  $\bfq$ is a decreasing vector, then
\begin{enumerate}
\item $\|\bfp\|_2^2\bfq^T\bfs-\bfp^T\bfq\bfp^T\bfs>0$ if $\bfs$ is a decreasing vector.
\item $\|\bfq\|_2^2\bfp^T\bfs-\bfp^T\bfq\bfq^T\bfs> 0$ if $\bfs$ is an increasing vector.
\end{enumerate}
\end{cor}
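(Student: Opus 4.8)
The plan is to obtain Corollary~\ref{cor:ineq_monos} as a direct specialization of Lemma~\ref{lemma:baseineq}: all the analytic work has already been done there, and what remains is only to translate the hypothesis stated in terms of $\bfs$ into the hypothesis on the ratio vector ($\bfs/\bfp$ in the first case, $\bfs/\bfq$ in the second) that Lemma~\ref{lemma:baseineq} actually invokes.

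For the first inequality I would begin from the observation that when $\bfp$ is a positive strictly increasing vector, its componentwise reciprocal $1/\bfp$ is a positive strictly decreasing vector. Writing $\bfs/\bfp=\bfs\circ(1/\bfp)$, for any indices $i<j$ one has $s_i/p_i\ge s_j/p_i>s_j/p_j$ whenever $\bfs$ is positive and decreasing, so $\bfs/\bfp$ is in fact strictly decreasing, hence a non-constant decreasing vector. Item~\ref{lemma2_4} of Lemma~\ref{lemma:baseineq} then applies verbatim, and since $\bfp$ is strictly increasing its strict case yields $\|\bfp\|_2^2\bfq^T\bfs-\bfp^T\bfq\bfp^T\bfs>0$.

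For the second inequality I would argue symmetrically: if $\bfq$ is positive and decreasing then $1/\bfq$ is positive and increasing, so $\bfs/\bfq=\bfs\circ(1/\bfq)$ is increasing whenever $\bfs$ is positive and increasing (for $i<j$, $s_i/q_i\le s_j/q_i\le s_j/q_j$), and it is non-constant in the setting of interest, where $\bfq$ is a sampled non-constant decreasing function so that $q_i>q_{i+1}$ for some $i$ forces $s_i/q_i<s_{i+1}/q_{i+1}$. Item~\ref{lemma2_1} of Lemma~\ref{lemma:baseineq} then applies, and once more the strict monotonicity of $\bfp$ promotes it to the strict inequality $\|\bfq\|_2^2\bfp^T\bfs-\bfp^T\bfq\bfq^T\bfs>0$.

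There is no real obstacle here; the entire content is already inside Lemma~\ref{lemma:baseineq}. The only points that need care are bookkeeping — matching each conclusion to the correct item and keeping straight which ratio is monotone in which direction — and noting that the \emph{strictness} of the final bound is carried by the strict increase of $\bfp$ (through the strict case of Lemma~\ref{lemma:baseineq}) rather than by any strictness assumed of $\bfs$.
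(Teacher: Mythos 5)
Your proposal is correct and is essentially the paper's own argument: the paper derives this corollary in one line by observing that $\bfs/\bfp$ is (non-constant) decreasing when $\bfs$ is decreasing and $\bfs/\bfq$ is (non-constant) increasing when $\bfs$ is increasing, then invoking items~\ref{lemma2_4} and~\ref{lemma2_1} of Lemma~\ref{lemma:baseineq} with the strictness supplied by the strict increase of $\bfp$. Your added care about the degenerate case in the second item (if both $\bfs$ and $\bfq$ were constant, $\bfs/\bfq$ would be constant and the expression would vanish, so non-constancy must come from the setting) is a legitimate refinement that the paper's one-sentence derivation glosses over.
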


\begin{lemma}\label{lemma:MA1_ana}
If $\bfp$, $\bfq$, $\bfr$ and $\bfs$ are positive real vectors, of which $\bfp$ is strictly increasing, $\bfq$ is decreasing, and $\bfp$, $\bfr$, $\bfs$ and $x^*$  satisfy $\bfp x^*-\bfs=\bfr$; and $[\hat{x},\hat{b}]=\mathrm{argmin}\|\bfp x-b\bfq-\bfr\|_2^2$; then
\begin{enumerate}
 \item the estimated solution $\hat{x}$ and exact solution $x^*$ are related by
\begin{itemize}
\item $\hat{x}>x^*$ if  $\bfs/ \bfq$ is a non-constant decreasing vector,
\item $\hat{x}<x^*$ if  $\bfs/ \bfq$ is a non-constant increasing vector,
\item $\hat{x}=x^*$ if  $\bfs/ \bfq$ is a constant vector;
\end{itemize}
\item  the following inequality is true without any monotonicity assumptions:
\begin{equation}\label{ieq:error_bound_base}
 |\hat{x}-x^*| \le  \frac{\bfp^T\bfq \|\bfq\|_2^2}
{\|\bfp\|_2^2\|\bfq\|_2^2-(\bfp^T\bfq)^2}V(\bar{\bfs}). 
\end{equation}

\item\label{lemma_bsign} the sign of the intercept  $\hat{b}$ is determined as follows:
\begin{itemize}
 \item $\hat{b}>0$ if  $\bfs/ \bfp$ is a non-constant decreasing vector,
\item $\hat{b}<0$ if  $\bfs/ \bfp$ is a non-constant increasing vector,
\item $\hat{b}=0$ if  $\bfs/ \bfp$ is a constant vector;
\end{itemize}

\item given $x=x^*$, the LS solution of $\bfp x-b\bfq\approx\bfr$ for $b$ is $b=\bfq^T\bfs/\|\bfq\|_2^2$;
\item given $b=\bfq^T\bfs/\|\bfq\|_2^2$, the LS solution of $\bfp x-b\bfq\approx\bfr$ for $x$  and the true solution $x^*$ have the same relationship as stated in the first conclusion of this theorem.
\end{enumerate}
\end{lemma}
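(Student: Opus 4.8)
\medskip
\noindent\textbf{Proof strategy.}
The plan is to treat the displayed problem as a plain two-parameter linear least squares fit and read everything off its normal equations. Writing the normal equations for $\min_{x,b}\|\bfp x-b\bfq-\bfr\|_2^2$ gives $\|\bfp\|_2^2\hat x-(\bfp^T\bfq)\hat b=\bfp^T\bfr$ and $(\bfp^T\bfq)\hat x-\|\bfq\|_2^2\hat b=\bfq^T\bfr$; substituting the exact relation $\bfr=\bfp x^*-\bfs$ and setting $\delta=\hat x-x^*$ converts these into $\|\bfp\|_2^2\delta-(\bfp^T\bfq)\hat b=-\bfp^T\bfs$ and $(\bfp^T\bfq)\delta-\|\bfq\|_2^2\hat b=-\bfq^T\bfs$, with coefficient determinant $D=-\bigl(\|\bfp\|_2^2\|\bfq\|_2^2-(\bfp^T\bfq)^2\bigr)$. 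Here the Cauchy--Schwarz inequality is strict, since a strictly increasing positive $\bfp$ and a decreasing positive $\bfq$ cannot be proportional, so $D<0$ and the fit is unique.

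By Cramer's rule, $\delta=\bigl(\|\bfq\|_2^2\bfp^T\bfs-\bfp^T\bfq\,\bfq^T\bfs\bigr)/D$ and $\hat b=\bigl(\|\bfp\|_2^2\bfq^T\bfs-\bfp^T\bfq\,\bfp^T\bfs\bigr)/\bigl(\|\bfp\|_2^2\|\bfq\|_2^2-(\bfp^T\bfq)^2\bigr)$. The numerator of $\delta$ is exactly the quantity treated in items~\ref{lemma2_1}--\ref{lemma2_3} of Lemma~\ref{lemma:baseineq}, whose sign is determined by whether $\bfs/\bfq$ is non-constant decreasing, non-constant increasing, or constant (strictly, because $\bfp$ is strictly increasing); since $D<0$ this flips the sign and gives conclusion~1. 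Likewise the numerator of $\hat b$ is the quantity in items~\ref{lemma2_4}--\ref{lemma2_6}, with sign governed by $\bfs/\bfp$, and the denominator is positive, so conclusion~3 follows.

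For the error bound I would write $s_i=\bar s_i q_i$ so that the numerator of $\delta$ becomes $\sum_i\bar s_i w_i$ with $w_i=q_i\bigl(\|\bfq\|_2^2 p_i-(\bfp^T\bfq)q_i\bigr)$. The key observation is $\sum_i w_i=\|\bfq\|_2^2\bfp^T\bfq-\bfp^T\bfq\,\|\bfq\|_2^2=0$, so $\sum_i\bar s_i w_i=\sum_i(\bar s_i-c)w_i$ for any constant $c$; taking $c$ to be the midpoint of $[\min_i\bar s_i,\max_i\bar s_i]$ gives $|\bar s_i-c|\le\tfrac12 V(\bar{\bfs})$, hence $\bigl|\sum_i\bar s_i w_i\bigr|\le\tfrac12 V(\bar{\bfs})\sum_i|w_i|$. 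A triangle-inequality estimate using positivity of $\bfp$ and $\bfq$ then gives $\sum_i|w_i|\le\|\bfq\|_2^2\,\bfp^T\bfq+\bfp^T\bfq\,\|\bfq\|_2^2=2(\bfp^T\bfq)\|\bfq\|_2^2$, and dividing by $|D|$ produces (\ref{ieq:error_bound_base}) with no monotonicity hypotheses. Conclusions~4 and~5 are short specializations: fixing $x=x^*$ leaves the residual $\bfs-b\bfq$, minimized over $b$ at $b=\bfq^T\bfs/\|\bfq\|_2^2$; fixing that $b$ and minimizing over $x$ gives $\|\bfp\|_2^2(x-x^*)=-\bigl(\|\bfq\|_2^2\bfp^T\bfs-\bfp^T\bfq\,\bfq^T\bfs\bigr)/\|\bfq\|_2^2$, so the sign of $x-x^*$ is again the one dictated by items~\ref{lemma2_1}--\ref{lemma2_3} of Lemma~\ref{lemma:baseineq}, reproducing conclusion~1.

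The main obstacle I anticipate is getting the exact constant in (\ref{ieq:error_bound_base}): the crude estimate $\bigl|\sum_i\bar s_i w_i\bigr|\le V(\bar{\bfs})\sum_i|w_i|$ loses a factor of two, so the recentering of $\bar{\bfs}$ about the midpoint together with the zero-sum identity $\sum_i w_i=0$ is genuinely needed. The rest is bookkeeping, but one must keep careful track that the hypothesis ``$\bfp$ strictly increasing'' is precisely what makes the normal-equations determinant nonzero (so $\hat x$, $\hat b$ are well defined) and what upgrades the inequalities of Lemma~\ref{lemma:baseineq} to the strict ones needed for the strict versions of conclusions~1 and~3.
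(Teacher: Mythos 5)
Your proposal is correct and follows essentially the same route as the paper: write down the explicit least-squares solution (equivalently, solve the normal equations), substitute $\bfr=\bfp x^*-\bfs$ so that the perturbation of $\hat x$ and the value of $\hat b$ become the two bilinear expressions of Lemma~\ref{lemma:baseineq}, read off their signs from that lemma, and handle items 4 and 5 by the same one-dimensional specializations. The only point of divergence is the error bound, where the paper bounds $\bfp^T\bfq\,(\bfq\circ\bfq)^T\bar{\bfs}$ above by $\bfp^T\bfq\,\|\bfq\|_2^2\max_i\bar s_i$ and $\|\bfq\|_2^2(\bfp\circ\bfq)^T\bar{\bfs}$ below by $\|\bfq\|_2^2\,\bfp^T\bfq\,\min_i\bar s_i$ (the two coefficients coincide, so the difference is exactly $\bfp^T\bfq\,\|\bfq\|_2^2\,V(\bar{\bfs})$ with no factor of two to recover), whereas your zero-sum recentering of $\bar{\bfs}$ about the midpoint reaches the identical constant by a slightly longer but equally valid path.
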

\begin{proof}
  It is easy to verify that the LS solution of $\bfp x-b\bfq\approx \bfr$  is 
$$\hat{x}=\frac{\|\bfq\|_2^2\bfp^T\bfr-\bfp^T\bfq\bfq^T\bfr}{\|\bfp\|_2^2\|\bfq\|_2^2-(\bfp^T\bfq)^2} ,\quad \hat{b}=\frac{-\|\bfp\|_2^2\bfq^T\bfr+\bfp^T\bfq\bfp^T\bfr}{\|\bfp\|_2^2\|\bfq\|_2^2-(\bfp^T\bfq)^2}.$$ 
The proof then follows as outlined below: 
\begin{enumerate}
\item  Replace $\bfr$ in the  expression for $\hat{x}$ with $\bfp x^*-\bfs$. Then
\begin{eqnarray}
 \hat{x}&=&\frac{\|\bfq\|_2^2\bfp^T(\bfp x^*-\bfs)-\bfp^T\bfq\bfq^T(\bfp x^*-\bfs)}{\|\bfp\|_2^2\|\bfq\|_2^2-(\bfp^T\bfq)^2}\nonumber \\
&=&x^*+\frac{\bfp^T\bfq\bfq^T\bfs-\|\bfq\|_2^2\bfp^T\bfs
}{\|\bfp\|_2^2|\bfq\|_2^2-(\bfp^T\bfq)^2}, \label{eq:hatx}
\end{eqnarray}
and the results immediately follow from  Lemma \ref{lemma:baseineq} (\ref{lemma2_1})-(\ref{lemma2_3}) and the fact $\|\bfp\|_2\|\bfq\|_2>\bfp^T\bfq$ when $\bfp$ is not linear proportional to $\bfq$.

\item
Because 
\begin{eqnarray*}
&&\bfp^T\bfq\bfq^T\bfs-\|\bfq\|_2^2\bfp^T\bfs\\
&=&\bfp^T\bfq (\bfq\circ \bfq)^T \bar{\bfs}-\|\bfq\|_2^2 (\bfp\circ \bfq)^T \bar{\bfs}\\
&\le& \bfp^T\bfq \|\bfq\|_2^2 \max_i( \bar{s}_i)-\|\bfq\|_2^2 \bfp^T\bfq \cdot \min_i(\bar{s}_i)\\
&=&\bfp^T\bfq \|\bfq\|_2^2  (\max_i(\bar{s}_i)-\min_i(\bar{s}_i)),
\end{eqnarray*}
and similarly
$$\bfp^T\bfq\bfq^T\bfs-\|\bfq\|_2^2\bfp^T\bfs   
\ge \bfp^T\bfq \|\bfq\|_2^2 (\min_i(\bar{s}_i)-\max_i(\bar{s}_i)),$$
We have 
$$|\bfp^T\bfq\bfq^T\bfs-\|\bfq\|_2^2\bfp^T\bfs|\le \bfp^T\bfq \|\bfq\|_2^2 (\min_i(\bar{s}_i)-\max_i(\bar{s}_i)).$$ 
Using the fact $\|\bfp\|_2^2\|\bfq\|_2^2-(\bfp^T\bfq)^2>0$ and (\ref{eq:hatx}), we conclude the inequality is true.

\item Again we replace $\bfr$  with $\bfp x^*-\bfs$, then the expression for $\hat{b}$ becomes
\begin{eqnarray}
 \hat{b}&=&\frac{-\|\bfp\|_2^2\bfq^T(\bfp x^*-\bfs)+\bfp^T\bfq\bfp^T(\bfp x^*-\bfs)}{\|\bfp\|_2^2|\bfq\|_2^2-(\bfp^T\bfq)^2}\nonumber\\
&=& \frac{\|\bfp\|_2^2\bfq^T\bfs-\bfp^T\bfq\bfp^T\bfs}{\|\bfp\|_2^2\|\bfq\|_2^2-(\bfp^T\bfq)^2}.\label{eq:hatb}
\end{eqnarray}
The results immediately follow from  Lemma \ref{lemma:baseineq} (\ref{lemma2_4})-(\ref{lemma2_6}) and the fact $\|\bfp\|_2\|\bfq\|_2>\bfp^T\bfq$ when $\bfp$ and $\bfq$ do not have the same direction.
\item This result is easily verified.
\item Given $b=\bfq^T\bfs/\|\bfq\|_2^2$, the LS solution of $\bfp x-b\bfq\approx\bfr$ for $x$ is
\begin{eqnarray*}
 \hat{x}&=&\frac{1 }{ \|\bfp\|_2^2}\bfp^T(\bfq b+\bfr)\\
&=&\frac{1 }{ \|\bfp\|_2^2} (\bfp^T\bfq \frac{\bfq^T\bfs }{ \|\bfq\|_2^2}+\bfp^T(\bfp x^*-\bfs) )\\
&=&x^*+ \frac{\bfp^T\bfq\bfq^T\bfs-\|\bfq\|_2^2\bfp^T\bfs }{\|\bfp\|_2^2\|\bfq\|_2^2}.
\end{eqnarray*}
The results now follow from Lemma \ref{lemma:baseineq}.
\end{enumerate}
\end{proof}

We now transform the exact equation to $\bfp/\bfq x^*- \bfs/\bfq=\bfr/\bfq$ and rewrite the results using vectors $\bar{\bfp}=\bfp/\bfq$, $\bar{\bfs}=\bfs/\bfq$ and $\bar{\bfr}=\bfr/\bfq$.  Correspondingly, we find the  LS solution of $\bar{\bfp}x -\bfe b\approx\bar{\bfr}$  for $\bfe=(1$, $1, \cdots, 1)^T$. 

\begin{cor}\label{cor:Logan-GA}
If $\bar{\bfp}$, $\bar{\bfr}$ and  $\bar{\bfs}$  are positive, of which  $\bar{\bfp}$ is strictly increasing, $\bar{\bfp}$, $\bar{\bfr}$, $\bar{\bfs}$ and $x^*$  satisfy $\bar{\bfp} x^*-\bar{\bfs}=\bar{\bfr}$; and $[\hat{x},\hat{b}]=\mathrm{argmin}\|\bar{\bfp} x-b\bfe-\bar{\bfr}\|_2^2$, then
\begin{enumerate}
\item \label{item:estx} the estimated solution $\hat{x}$ and the exact solution $x^*$ are related by
\begin{itemize}
\item $\hat{x}>x^*$ if  $\bar{\bfs}$ is a non-constant decreasing vector,
\item $\hat{x}<x^*$ if  $\bar{\bfs}$ is a non-constant increasing vector,
\item $\hat{x}=x^*$ if  $\bar{\bfs}$ is a constant vector;
\end{itemize}
Moreover, the following inequality is true without any monotonicity assumptions.
\begin{equation}\label{DV_error}
 |\hat{x}-x^*|\le \frac{n\|\bar{\bfp}\|_1}{n\|\bar{\bfp}\|_2^2-\|\bar{\bfp}\|_1^2} V(\bar{\bfs}).
\end{equation}
\item \label{item:est_b} The sign of the intercept  $\hat{b}$ is determined as follows:
\begin{itemize}
\item $\hat{b}>0$ if  $\bar{\bfs}/\bar{\bfp}$ is a non-constant decreasing vector,
\item $\hat{b}<0$ if  $\bar{\bfs}/\bar{\bfp}$ is a non-constant increasing vector,
\item $\hat{b}=0$ if  $\bar{\bfs}/\bar{\bfp}$ is a constant vector.
\end{itemize}

In addition, 
\begin{itemize}
 \item $\hat{b}> \displaystyle\sum_{i=1}^n \bar{s}_i/n$  if $\bar{\bfs}$ is a non-constant decreasing vector,
\item $\hat{b}< \displaystyle\sum_{i=1}^n \bar{s}_i/n$ if  $\bar{\bfs}$ is a non-constant increasing vector,
\item $\hat{b}= \displaystyle\sum_{i=1}^n \bar{s}_i/n$ if  $\bar{\bfs}$ is a constant vector;
\end{itemize}
\item Given $x=x^*$, the LS solution of $\bar{\bfp} x-b\bfe\approx\bar{\bfr}$ for $b$ is $b=\displaystyle\sum_{i=1}^n \bar{s}_i/n$;
\item Given $b=\displaystyle\sum_{i=1}^n \bar{s}_i/n$, the LS solution of $\bar{\bfp} x-b\bfe\approx\bar{\bfr}$ for $x$  and the true solution $x^*$ are related as stated in the first conclusion of this theorem.
\end{enumerate}
\end{cor}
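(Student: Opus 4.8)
The plan is to read Corollary~\ref{cor:Logan-GA} off Lemma~\ref{lemma:MA1_ana} as the special case obtained by replacing the decreasing vector $\bfq$ with the constant vector $\bfe$, and the triple $(\bfp,\bfr,\bfs)$ with $(\bar{\bfp},\bar{\bfr},\bar{\bfs})$. First I would verify that the hypotheses of Lemma~\ref{lemma:MA1_ana} hold in this setting: $\bar{\bfp}$, $\bar{\bfr}$, $\bar{\bfs}$ are positive by assumption, $\bar{\bfp}$ is strictly increasing, $\bfe$ is (weakly) decreasing, and dividing $\bfp x^*-\bfs=\bfr$ componentwise by $\bfq$ gives precisely $\bar{\bfp}x^*-\bar{\bfs}=\bar{\bfr}$. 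Since $\bar{\bfp}$ is strictly increasing it is not proportional to $\bfe$, so $n\|\bar{\bfp}\|_2^2-\|\bar{\bfp}\|_1^2>0$, which is the nondegeneracy the Lemma uses.

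Next I would translate the conclusions. The quantity denoted $\bfs/\bfq$ in Lemma~\ref{lemma:MA1_ana} becomes $\bar{\bfs}/\bfe=\bar{\bfs}$, so part~1 of the Lemma yields the trichotomy for $\hat{x}$ versus $x^*$ in item~\ref{item:estx}. Substituting $\bfp^T\bfq=\bfe^T\bar{\bfp}=\|\bar{\bfp}\|_1$, $\|\bfq\|_2^2=n$ and $\|\bfp\|_2^2=\|\bar{\bfp}\|_2^2$ into (\ref{ieq:error_bound_base}) gives the bound (\ref{DV_error}). Part~3 of the Lemma, with $\bfp\to\bar{\bfp}$ and $\bfs\to\bar{\bfs}$, gives the sign of $\hat{b}$ according to the monotonicity of $\bar{\bfs}/\bar{\bfp}$, the first claim of item~\ref{item:est_b}. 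Parts~4 and~5 with $\bfq=\bfe$ give $b=\bfe^T\bar{\bfs}/\|\bfe\|_2^2=\tfrac1n\sum_i\bar{s}_i$ and the corresponding relation between the LS value of $x$ and $x^*$, i.e.\ conclusions~3 and~4 of the Corollary.

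The only genuinely new assertion is the comparison of $\hat{b}$ with the average $\tfrac1n\sum_i\bar{s}_i$. For this I would start from the closed form $\hat{b}=\bigl(\|\bar{\bfp}\|_2^2\sum_i\bar{s}_i-\|\bar{\bfp}\|_1\,\bar{\bfp}^T\bar{\bfs}\bigr)/\bigl(n\|\bar{\bfp}\|_2^2-\|\bar{\bfp}\|_1^2\bigr)$, obtained by substituting $\bar{\bfr}=\bar{\bfp}x^*-\bar{\bfs}$ into the normal-equation expression for $\hat{b}$ from the proof of Lemma~\ref{lemma:MA1_ana} (the $x^*$ terms cancel), subtract the average, and simplify to
\[
\hat{b}-\frac1n\sum_i\bar{s}_i=\frac{\|\bar{\bfp}\|_1\bigl(\|\bar{\bfp}\|_1\sum_i\bar{s}_i-n\,\bar{\bfp}^T\bar{\bfs}\bigr)}{n\bigl(n\|\bar{\bfp}\|_2^2-\|\bar{\bfp}\|_1^2\bigr)}.
\]
The denominator and $\|\bar{\bfp}\|_1$ are positive, so the sign is that of $\|\bar{\bfp}\|_1\sum_i\bar{s}_i-n\,\bar{\bfp}^T\bar{\bfs}$, which is controlled by Chebyshev's sum inequality: for $\bar{\bfp}$ increasing and $\bar{\bfs}$ decreasing it is nonnegative, and strictly positive since $\bar{\bfp}$ is non-constant, giving $\hat{b}>\tfrac1n\sum_i\bar{s}_i$; the increasing and constant cases are analogous. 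I do not expect a serious obstacle: the bookkeeping of which hypothesis of the Lemma maps to which constant-vector fact is the only thing needing care, and the Chebyshev step for the $\hat{b}$-versus-average claim is the one place a short independent computation is required.
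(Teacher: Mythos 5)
Your proposal is correct and follows essentially the same route as the paper: both read the corollary off Lemma~\ref{lemma:MA1_ana} by setting $\bfq=\bfe$, and both handle the two genuinely new claims---the explicit bound (\ref{DV_error}) and the comparison of $\hat{b}$ with $\frac{1}{n}\sum_i\bar{s}_i$---by substituting $\bar{\bfr}=\bar{\bfp}x^*-\bar{\bfs}$ into the normal-equation formulas and invoking Chebyshev's sum inequality (your direct substitution into (\ref{ieq:error_bound_base}) is a harmless shortcut past the paper's re-derivation). The only nitpick is that strictness in the Chebyshev step requires \emph{both} $\bar{\bfp}$ and $\bar{\bfs}$ to be non-constant, which holds here since $\bar{\bfs}$ is assumed non-constant in the relevant cases, but your justification cites only the non-constancy of $\bar{\bfp}$.
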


\begin{proof}
Most results are a direct Corollary of Lemma \ref{lemma:MA1_ana} by setting $\bfq=\bfe$. We only prove the new results (\ref{item:estx}) and (\ref{item:est_b}).
\begin{enumerate}
\item We just need to prove the bounds for $|\hat{x}-x^*|$. Setting $\bfq=\bfe$ in (\ref{eq:hatx}) we have
\begin{equation}\label{eq:xhat}
 \hat{x}=x^*+\frac{\|\bar{\bfp}\|_1\displaystyle\sum_i \bar{s}_i-n\bar{\bfp}^T\bar{\bfs}
}{n\|\bar{\bfp}\|_2^2-\|\bar{\bfp}\|_1^2}.
\end{equation}
Because 
$$ \|\bar{\bfp}\|_1\displaystyle\sum_i \bar{s}_i-n\bar{\bfp}^T\bar{\bfs}\le n\cdot \max_i( \bar{s}_i) \|\bar{\bfp}\|_1- n\cdot \min_i(\bar{s}_i) \|\bar{\bfp}\|_1=n\|\bar{\bfp}\|_1 (\max_i(\bar{s}_i)-\min_i(\bar{s}_i)),$$
$$\|\bar{\bfp}\|_1\displaystyle\sum_i\bar{s}_i-n\bar{\bfp}^T\bar{\bfs}\ge n\cdot \min_i(\bar{s}_i) \|\bar{\bfp}\|_1- n\cdot \max_i(\bar{s}_i) \|\bar{\bfp}\|_1=n\|\bar{\bfp}\|_1 (\min_i(\bar{s}_i)-\max_i(\bar{s}_i)),$$
and $n\|\bar{\bfp}\|_2^2-\|\bar{\bfp}\|_1^2>0$  we obtain
\begin{eqnarray*}
|\hat{x}-x^*|&=&\frac{| \|\bar{\bfp}\|_1\displaystyle\sum_i\bar{s}_i-n\bar{\bfp}^T\bar{\bfs} |
}{n\|\bar{\bfp}\|_2^2-\|\bar{\bfp}\|_1^2}\\
&\le& \frac{n\|\bar{\bfp}\|_1(\max_i(\bar{s}_i)-\min_i(\bar{s}_i))
}{n\|\bar{\bfp}\|_2^2-\|\bar{\bfp}\|_1^2}\\
&=&
\frac{n\|\bar{\bfp}\|_1}{n\|\bar{\bfp}\|_2^2-\|\bar{\bfp}\|_1^2} V(\bar{\bfs}).
\end{eqnarray*}
\item  Setting $\bfq=\bfe$ in (\ref{eq:hatb}) we have 
\begin{equation*}
\hat{b}= \frac{\|\bar{\bfp}\|_2^2 \| \bar{\bfs}\|_1 -  \|\bar{\bfp}\|_1  \bar{\bfp}^T \bar{\bfs}} {n\|\bar{\bfp}\|_2^2  -  \|\bar{\bfp}\|_1^2}.
\end{equation*}
The results on the sign follow from Lemma \ref{lemma:MA1_ana} (\ref{lemma_bsign}).
For the remaining three inequalities, we only prove the case for which  $\bar{\bfs}$ is decreasing. Proofs of the other two are similar. Setting $\bfq=\bfe$ in (\ref{eq:hatb}) we have 
\begin{eqnarray*}
\hat{b}&=&
 \frac{\|\bar{\bfp}\|_2^2  \|\bar{\bfs}\|_1 -  \|\bar{\bfp}\|_1  \bar{\bfp}^T \bar{\bfs}} {n\|\bar{\bfp}\|_2^2  -  \|\bar{\bfp}\|_1^2}\\
&>&\frac{ \|\bar{\bfp}\|_2^2  \| \bar{\bfs}\|_1 -  \|\bar{\bfp}\|_1 \frac{1 }{ n} \|\bar{\bfp}\|_1 \|\bar{\bfs}\|_1  } {n\|\bar{\bfp}\|_2^2  -  \|\bar{\bfp}\|_1^2}\\
&=&\frac{\|\bar{\bfs}\|_1  }{ n}=\frac{\displaystyle\sum_{i=1}^n \bar{s}_i }{ n}.
\end{eqnarray*}
\end{enumerate}
\end{proof}

\section{Appendix B: component-wise perturbation analysis for LS solution of (\ref{eq:LSnewmodel})}
In Remark 2,  we claimed that ``the estimate of DV is much more robust to noise in the formulation than are the estimates of $A$ and $B$ because $\int_0^{t}\Cp(\tau){\rm d}\tau$ is much larger than both $\CT(t)$ and $\CS(t)$ for $t>t^*$''. Here we present a theoretical explanation, which is helpful for algorithm design in quantification. Instead of considering a general linear equation, which is out of the range of this paper, we assume a system of
 equations $A\bfx=\bfy$ with only  two independent variables $\bfx=[x_1,x_2]^T$. The two columns of the system matrix $A$ are denoted by $\bfa_1$ and $\bfa_2$, i.e. $A=[\bfa_1, \bfa_2]$. 
\begin{theorem} \label{thm:var}
Suppose the linear system $A\bfx \approx \bfy+{\mathbf \epsilon}$, for $A=[\bfa_1, \bfa_2]$,  has the exact solution $\bfx=[x_1^*, x_2^*]$, the uncorrelated noise vector ${\mathbf \epsilon}$ obeys a multi-variable Gaussian distribution with zero means and common variance $\sigma^2$ and that $\|\bfa_1\|>>\|\bfa_2\|$. Then least squares  solution $\hat{\bfx}=[\hat{x_1},\hat{x_2}]^T$ has the following statistical properties
\begin{enumerate}
 \item $E(\hat{x}_1)=x_1^*$ and $E(\hat{x}_2)=x_2^*$, and
\item $\mathrm{Var}(\hat{x}_1) << \mathrm{Var}(\hat{x}_2)$.
\end{enumerate}
\end{theorem}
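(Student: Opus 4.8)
The plan is to use the closed-form ordinary least squares solution together with the fact that the data are consistent up to the additive noise. Writing the normal equations $A^TA\hat{\bfx}=A^T(\bfy+{\mathbf \epsilon})$, we get $\hat{\bfx}=(A^TA)^{-1}A^T(\bfy+{\mathbf \epsilon})$; this is well defined because $\|\bfa_1\|>>\|\bfa_2\|>0$ forces $\bfa_1$ and $\bfa_2$ to be linearly independent, so $A^TA$ is nonsingular. Since $\bfx^*$ is the exact solution we have $A\bfx^*=\bfy$, and subtracting gives the key identity
\[
\hat{\bfx}-\bfx^*=(A^TA)^{-1}A^T{\mathbf \epsilon}.
\]

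First I would establish unbiasedness. Taking expectations in the identity above and using $E({\mathbf \epsilon})=\mathbf 0$ with linearity of $E$ yields $E(\hat{\bfx})=\bfx^*$, i.e. $E(\hat{x}_1)=x_1^*$ and $E(\hat{x}_2)=x_2^*$. This is conclusion~1.

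Next, for the variances I would compute the covariance matrix of $\hat{\bfx}$. Since $\mathrm{Cov}({\mathbf \epsilon})=\sigma^2 I$, the identity above and symmetry of $(A^TA)^{-1}$ give $\mathrm{Cov}(\hat{\bfx})=\sigma^2(A^TA)^{-1}A^TA(A^TA)^{-1}=\sigma^2(A^TA)^{-1}$. For the $2\times 2$ Gram matrix,
\[
A^TA=\begin{pmatrix}\|\bfa_1\|^2 & \bfa_1^T\bfa_2\\ \bfa_1^T\bfa_2 & \|\bfa_2\|^2\end{pmatrix},
\qquad
(A^TA)^{-1}=\frac{1}{D}\begin{pmatrix}\|\bfa_2\|^2 & -\bfa_1^T\bfa_2\\ -\bfa_1^T\bfa_2 & \|\bfa_1\|^2\end{pmatrix},
\]
where $D=\|\bfa_1\|^2\|\bfa_2\|^2-(\bfa_1^T\bfa_2)^2>0$ by the Cauchy--Schwarz inequality, strict because the two columns are not parallel. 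Reading off the diagonal entries,
\[
\mathrm{Var}(\hat{x}_1)=\frac{\sigma^2\|\bfa_2\|^2}{D},\qquad \mathrm{Var}(\hat{x}_2)=\frac{\sigma^2\|\bfa_1\|^2}{D},
\]
so that $\mathrm{Var}(\hat{x}_1)/\mathrm{Var}(\hat{x}_2)=\|\bfa_2\|^2/\|\bfa_1\|^2<<1$ under the hypothesis $\|\bfa_1\|>>\|\bfa_2\|$. This is conclusion~2.

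There is no genuine technical obstacle here; the only two points needing a word of care are that $A^TA$ be invertible (guaranteed once $\bfa_1,\bfa_2$ are linearly independent, which the scale-separation hypothesis tacitly presumes and which one should state explicitly) and that the common Gram determinant $D$ cancels in the ratio of the two variances — it is precisely this cancellation that makes the comparison clean and independent of the angle between $\bfa_1$ and $\bfa_2$.
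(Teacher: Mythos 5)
Your proof is correct, and it takes a genuinely different route from the paper's. The paper proves this theorem via the singular value decomposition: it writes $A=USV^T$ with $V$ a rotation by an angle $\theta$, deduces from $\|\bfa_1\|>>\|\bfa_2\|$ that $\cos^2\theta\approx 1$ and $s_1^2>>s_2^2$, and then compares the squared norms of the two rows of $VS^{\dag}U^T$, since these row norms (times $\sigma^2$) are the two variances. Your argument instead inverts the $2\times 2$ Gram matrix directly and reads the variances off the diagonal of $\sigma^2(A^TA)^{-1}$. What your approach buys is an \emph{exact} identity, $\mathrm{Var}(\hat{x}_1)/\mathrm{Var}(\hat{x}_2)=\|\bfa_2\|^2/\|\bfa_1\|^2$, with the Gram determinant cancelling, so the conclusion follows immediately and quantitatively from the hypothesis; the paper's SVD route reaches the same conclusion only through a chain of ``$>>$'' approximations ($\cos^2\theta>>\sin^2\theta$, etc.) that are heuristic rather than exact. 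One small correction: your opening claim that $\|\bfa_1\|>>\|\bfa_2\|>0$ ``forces'' linear independence is not true (take $\bfa_2$ a small multiple of $\bfa_1$); norm disparity alone does not prevent the columns from being parallel. Your closing remark already concedes that full column rank must be assumed explicitly, and indeed the paper's proof makes the same tacit assumption when it writes $S^{\dag}$ with both $s_1,s_2>0$, so this is a shared, minor omission rather than a gap specific to your argument.
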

\begin{proof}
 We assume matrix $A$ has the following singular value decomposition 
\begin{equation}\label{eq:svd}
A=[\bfa_1, \bfa_2]=U S V^T=U
\left (\begin{array}{cc}
s_1 & 0\\
0& s_2\\
\vdots& \dots\\
0&0
\end{array}
\right )
\left (\begin{array}{cc}
\cos\theta & \sin\theta\\
-\sin\theta&\cos\theta
\end{array}
\right ),
\end{equation}
in which $s_1 \ge s_2$. Then  
$$\hat{\bfx}=VS^{\dag}U^T(\bfy+{\mathbf \epsilon})=\bfx^*+VS^{\dag}U^T {\mathbf \epsilon},$$
where 
$$S^{\dag}=  \left (\begin{array}{cccc}
1/s_1 & 0&\cdots&0\\
0& 1/s_2 &\cdots&0
\end{array}
\right ).$$ 
Because $U$ is an unitary matrix  and $\|\bfa_1\|>>\|\bfa_2\|$ we immediately derive  the the following inequality from equation (\ref{eq:svd}):
$$s_1^2\cos^2\theta+s_2^2\sin^2\theta>> s_1^2\sin^2\theta+s_2^2\cos^2\theta.$$
This inequality is equivalent to $(s_1^2-s_2^2)\cos^2\theta+s_2^2>> (s_1^2-s_2^2)\sin^2\theta+s_2^2,$ which implies $\cos^2\theta>>\sin^2\theta$, i.e. $\cos^2\theta\approx 1$ and $\sin^2\theta\approx 0$, and $s_1^2>>s_2^2$. If we denote the two rows of matrix $VS^{\dag}$ by $\bfq_1$ and $\bfq_2$ than 
\begin{eqnarray*}
 \|\bfq_1\|^2&=&\sin^2\theta/s_1^2+\cos^2\theta/s_2^2=1/s_1^2+\cos^2\theta(1/s_2^2-1/s_1^2),\\  
\|\bfq_2\|^2&=&\cos^2\theta/s_1^2+\sin^2\theta/s_2^2=1/s_1^2+\sin^2\theta(1/s_2^2-1/s_1^2).\\  
\end{eqnarray*}
 Because $\cos^2\theta>>\sin^2\theta$ and $1/s_2^2>>1/s_2^2$ we conclude $\|\bfq_2\|^2>>\|\bfq_1\|^2.$  If we let $\bfp_1$ and $\bfp_2$ be the two rows of matrix $VS^{\dag}U^T$ then $\|\bfp_1\|=\|\bfq_1\|$ and $\|\bfp_2\|=\|\bfq_2\|$ because $U$ is unitary. Thus $\|\bfp_2\|^2>>\|\bfp_1\|^2.$ Let 
$$\bfd=\hat{\bfx}-\bfx^*=VS^{\dag}U^T {\mathbf \epsilon}.$$
It is clear $E(d_1)=0$ and $E(d_2)=0$ because the means of ${\mathbf \epsilon}$ are zero, and  $\mathrm{Var}(d_1)=\sum_i p_{1i}^2\sigma^2=\|\bfp_1\|^2\sigma^2$ and $\mathrm{Var}(d_2)=\sum_i p_{2i}^2\sigma^2=\|\bfp_2\|^2\sigma^2$ resp.. Therefore $\mathrm{Var}(\hat{d_1}) << \mathrm{Var}(\hat{d_2})$. 
Because $\bfd=\hat{\bfx}-\bfx^*$ we conclude $E(\hat{\bfx})=\bfx^*$ and  $Var(\hat{x}_1) << \mathrm{Var}(\hat{x}_2)$
\end{proof}
This result is illustrated by the following simple example:
$$
\left (\begin{array}{cc}
4 & 1\\
8& 1\\
10& 1
\end{array}
\right )
\bfx =
\left (\begin{array}{c}
5\\
9\\
11
\end{array}
\right )
+
\left (\begin{array}{l}
\epsilon_1\\
\epsilon_2\\
\epsilon_3
\end{array}
\right )
$$
 The first column is much larger than the second column. If we add $1\%$ noise to the right hand side, i.e. $\epsilon_1\sim N(0, 0.05), \epsilon_2\sim N(0, 0.09)$ and $\epsilon_3\sim N(0, 0.115)$, and perform simulation with $1000$ realizations the distribution of the resulted $x_1$ and $x_2$ are illustrated in Figure~\ref{fig:diffsens}. These results are consistent with the conclusions in Theorem \ref{thm:var}.
\begin{figure}[bt]
\centerline{\includegraphics[height=3in,width=4in]{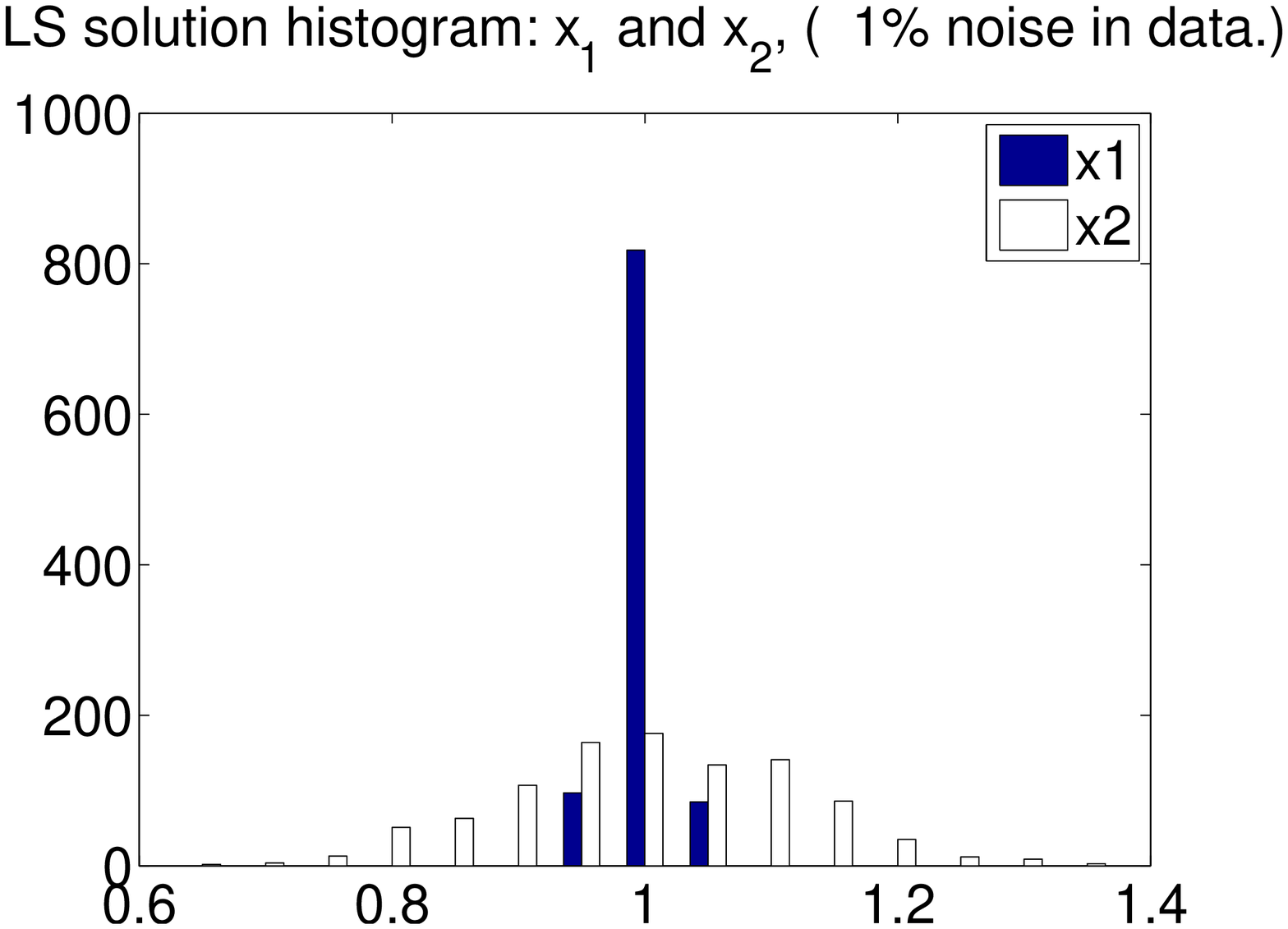}}
\label{fig:diffsens}
\end{figure}

\section{Appendix C: derivation for equation (\ref{eq:root})}
Integrating (\ref{eq:2T4kC1}) and (\ref{eq:2T4kCS}) from $0$ to $t$ we obtain
\begin{eqnarray}
 C_{F+NS}(t)&=&K_1\int_0^t \Cp(\tau){\rm d}\tau-(k_2+k_3)\int_0^tC_{F+NS}(\tau){\rm d}\tau+k_4\int_0^t\CS(\tau){\rm d}\tau, \label{eq:int2T4kC1} \\
 \CS(t)&=&k_3 \int_0^tC_{F+NS}(\tau){\rm d}\tau-k_4\int_0^t\CS(\tau){\rm d}\tau,\label{eq:int2T4kCS}\\
       &=& k_3 \int_0^tC_{F+NS}(\tau){\rm d}\tau-k_4\int_0^t(\CT(\tau)-C_{F+NS}(\tau)) {\rm d}\tau,\nonumber\\
       &=& -k_4 \int_0^t\CT(\tau) {\rm d}\tau +(k_3+k_4) \int_0^tC_{F+NS}(\tau) {\rm d}\tau.
 \label{eq:int2T4kCS1}
\end{eqnarray}
Taking the sum of equations  (\ref{eq:int2T4kC1}) and (\ref{eq:int2T4kCS}) yields: 
\begin{equation}\label{eq:CTsum}
 C_T(t)=K_1\int_0^t \Cp(\tau){\rm d}\tau-k_2\int_0^tC_{F+NS}(\tau){\rm d}\tau,
\end{equation}
and canceling $\int_0^tC_{F+NS}(\tau){\rm d}\tau$ from (\ref{eq:int2T4kCS1}) using (\ref{eq:CTsum}) gives:
$$  
\CS(t)= -k_4 \int_0^t\CT(\tau) {\rm d}\tau +\frac{k_3+k_4}{k_2}\left ( K_1\int_0^t \Cp(\tau){\rm d}\tau-C_T(t)\right ).
$$
This  can be transformed to (\ref{eq:root}) immediately by using $DV=\frac{K_1}{k_2}(1+\frac{k_3}{k_4})$.

\end{document}